\numberwithin{equation}{section}
\def\@tocline#1#2#3#4#5#6#7{\relax
  \ifnum #1>\c@tocdepth 
  \else
    \par \addpenalty\@secpenalty\addvspace{#2}%
    \begingroup \hyphenpenalty\@M
    \@ifempty{#4}{%
      \@tempdima\csname r@tocindent\number#1\endcsname\relax
    }{%
      \@tempdima#4\relax
    }%
    \parindent\z@ \leftskip#3\relax \advance\leftskip\@tempdima\relax
    \rightskip\@pnumwidth plus4em \parfillskip-\@pnumwidth
    #5\leavevmode\hskip-\@tempdima
      \ifcase #1
       \or\or \hskip 1em \or \hskip 2em \else \hskip 3em \fi%
      #6\nobreak\relax
      \dotfill
      \hbox to\@pnumwidth{\@tocpagenum{#7}}
    \par
    \nobreak
    \endgroup
  \fi}
\renewcommand{\i}{\mathrm{i}}
\newcommand{\C}{\mathds{C}}
\newcommand{\1}{\mathds{1}}
\newcommand{\N}{\mathbb{N}}
\newcommand{\R}{\mathds{R}}
\renewcommand{\S}{\mathfrak{S}}
\newcommand{\sgn}{\operatorname{sgn}}
\newtheorem{theorem}{Theorem}[section]
\newtheorem{definition}[theorem]{Definition}
\newtheorem{assumption}[theorem]{Assumption}
\newtheorem{proposition}[theorem]{Proposition}
\newtheorem{lemma}[theorem]{Lemma} 
\newtheorem{remark}[theorem]{Remark}
\newcommand{\bx}{\mathbf{x}}
\newcommand{\bX}{\mathbf{X}}
\newcommand{\ba}{\mathbf{a}}
\newcommand{\by}{\mathbf{y}}
\newcommand{\bz}{\mathbf{z}}
\newcommand{\bw}{\mathbf{w}}
\newcommand{\bW}{\mathbf{W}}
\newcommand{\cH}{\mathcal{H}}
\newcommand{\cK}{\mathcal{K}}
\newcommand{\cZ}{\mathcal{Z}}
\newcommand{\cF}{\mathcal{F}}
\newcommand{\mueq}{\mu_{\rm eq}}
\newcommand{\cEel}{\mathcal{E}^{\rm MF}}
\newcommand{\Eel}{E^{\rm MF}}
\newcommand{\ZN}{\mathcal{Z}_{N}}
\newcommand{\ZNV}{\mathcal{Z}_{N}^V}
\newcommand{\FC}{F^{\rm Corr}}
\newcommand{\ZGin}{\cZ^{\mathrm{Gin}}}
\newcommand{\Int}{\mathcal{I}}
\newcommand{\Kt}{\widetilde{K}}
\numberwithin{equation}{section}
\begin{document}

\title{Free-energy variations for determinantal 2D plasmas with holes}

\author[N. Rougerie]{Nicolas Rougerie}
\address{Ecole Normale Sup\'erieure de Lyon \& CNRS,  UMPA (UMR 5669)}
\email{nicolas.rougerie@ens-lyon.fr}

\date{July, 2026}

\begin{abstract}
We study the Gibbs equilibrium of a classical 2D Coulomb gas in the determinantal case $\beta =2$. The external potential is the sum of a quadratic term and the potential generated by individual charges pinned in several extended groups. This leads to an equilibrium measure (droplet) with flat density and macroscopic holes. We consider ``correlation energy'' (free energy minus its mean-field approximation) expansions, for large particle number $N$. Under the assumptions that the holes are sufficiently small, separated, and far from the droplet's outer boundary, we prove that (i) the correlation energy up to order $1$ is independent of the holes' locations and orientations, and (ii) the difference between the  correlation energies of systems differing by their number of holes essentially consists of ``topological'' $O(\log N)$ and $O(1)$ terms.   
\end{abstract}

\maketitle

\begin{center}
 \emph{Dedicated to Jakob Yngvason, on the occasion of his 80th birthday.}
\end{center}

\tableofcontents

\section{Introduction}

The 2D classical Coulomb gas\footnote{Always understood as the one-component plasma,  hereafter.}, on top of being an emblematic statistical physics model in its own right, is widely studied for its many connections with different fields of physics and mathematics~\cite{Forrester-10,ForByu-25,Klevtsov-16,Lewin-22,Serfaty-24,Rougerie-Elliott}. Of chief interest is the model's behavior for large particle numbers $N$, in particular effects beyond mean-field (MF) theory. Indeed, in the setting of our interest below, the leading order behavior is dictated by a non-linear effective one-particle theory, setting the macroscopic distribution of charges (the droplet). After zooming in on the microscopic inter-particles scale, a thermodynamic limit emerges as a local density approximation (LDA) of the original problem, where the ``local density'' is given by mean-field theory. Fluctuations beyond that are governed by a gaussian free field (GFF) emerging from the LDA. Recent years have seen this picture confirmed in great generality, we refer to~\cite{Serfaty-24}, in particular Section 9 therein for extensive review and references to the literature. Closest to our setting below, see in particular~\cite{AmeHedMak-11,BauBouNikYau-15,BauBouNikYau-16,LebSer-15,LebSer-16}.      

The behavior beyond LDA remains elusive, contrarily to the corresponding question for related 1D models (1D log-gases~\cite{BorGui-13,BorGui-24}). Predictions from the physics literature~\cite{JanManPis-94,JanTri-00,ZabWie-06} pointing to further signatures of the emergent GFF and topological effects have so far been mathematically vindicated only in special determinantal cases (and thus, for a specific temperature choice): on the sphere with holes at  the poles~\cite{ByuForLah-25,ByuCharMorSim-25,CriKui-22,BraDraSafWom-18}, on  Riemann surfaces without boundaries~\cite{KleMaMarWie-17,SheYu-25,Bourgoin-25}, in a radial context~\cite{AmeChaCro-23,ByuKanSeo-23,AllForLahShe-25,AllLah-25}, for a model with at most one hole in the droplet~\cite{ByuSeoYan-24,ByuYanYoo-26}, for special models leading to disconnected droplets~\cite{ByuLeeYan-21,Byun-25} etc ...

Our purpose is to investigate some of the signatures of the conjectured free-energy expansions~\cite{JanManPis-94,JanTri-00,ZabWie-06} in a special (determinantal) model where the droplet is non-radial and can have several holes. We cannot provide a full free-energy expansion, but we obtain clear signatures of the ``topological'' $\log N$ terms of the expansion~\footnote{In the convention we follow, the leading MF term is of order $N^2$, the LDA term of order $N$ being often considered the leading one when dealing with a neutral homogeneous system~\cite{JanManPis-94}.}, and some of the expected invariance features of the $O(1)$ terms. 

Consider $N$ particles in the plane of coordinates $\bX_N = (\bx_1, \ldots,\bx_N) \in \R^{2N}$ with energy
\begin{equation}\label{eq:Hamil}
H_N (\bx_1,\ldots,\bx_N) := \frac{1}{2}\sum_{j=1}^N N V(\bx_j) - \sum_{j<k} \log |\bx_j - \bx_k| 
\end{equation}
and consider the Gibbs state in the determinantal case (inverse temperature $\beta = 2$)
\begin{align}\label{eq:Gibbs}
\rho_{N,V} (\bx_1,\ldots,\bx_N) &= \frac{1}{\ZNV} \exp\left( -2 H_N (\bx_1,\ldots,\bx_N)\right)\nonumber \\
&= \frac{1}{\ZNV} \prod_{1\leq j < k \leq N} |\bx_j - \bx_k| ^2 e^{-N\sum_{j=1}^N V(\bx_j)}.
\end{align}
The logarithmic pairwise interaction corresponds to 2D Coulomb forces, and $V:\R^2 \mapsto \R$ is an external trapping potential, e.g. generated by a fixed charge distribution interacting with the $\bx_j$. By definition $\rho_{N,V}$ minimizes the free-energy functional 
\begin{equation}\label{eq:intro free func}
\cF_{N,V} [\mu] := \int_{\R^{2N}} H_N (\bX_N) \mu (\bX_N) d\bX_N + \frac{1}{2}  \int_{\R^{2N}} \mu (\bX_N) \log \mu (\bX_N) d\bX_N
\end{equation}
amongst probability measures $\mu$ on $\R^{2N}$ (in practice, amongst positive $L^1$-normalized functions). The corresponding infimum is 
\begin{equation}\label{eq:intro free ener}
F_{N,V} = - \frac{1}{2} \log \ZNV 
\end{equation}
and we are interested in large $N$ expansions thereof. Define, for a probability measure $\sigma$ on $\R^2$,  the mean-field energy functional
\begin{equation}\label{eq:intro MFE}
\cEel [\sigma] := \frac{N^2}{2}\int_{\R^2} V(\bx) \sigma (\bx) d\bx - \frac{N^2}{2}\iint_{\R^2 \times \R^2}\sigma(\bx) \log |\bx-\by| \sigma(\by) d\bx d\by
\end{equation}
obtained by inserting an uncorrelated ansatz $\mu = \sigma ^{\otimes N}$ in~\eqref{eq:intro free func} and neglecting the entropy term. Under very mild assumptions, the above has a minimum, denoted $\Eel$, and a minimizer $\mueq$, called the equilibrium measure. In great generality (see the aforementioned references) we have 
$$ F_{N,V} = \Eel (1+o(1))$$
for large $N$, and this is of order $N^2$. This estimate corresponds to the fact that 
$$ \frac{1}{N} \sum_{j= 1} ^N \delta_{\bx_j} \simeq \mueq$$
with overwhelming probability. Subtracting this well-known first order behavior, we shall chiefly be interested in the behavior of the ``correlation energy''
\begin{equation}\label{eq:corr ener}
\FC_{N,V} :=  F_{N,V} - \Eel.
\end{equation}
The Euler-Lagrange equation for~\eqref{eq:intro MFE} leads to 
$$ 
\mueq = \frac{1}{4\pi} \Delta V \1_\Sigma 
$$
for a set $\Sigma \subset \R^2$ called the droplet. We only consider the case where 
\begin{equation}\label{eq:intro mueq}
\mueq = \frac{1}{\pi} \1_\Sigma
\end{equation}
so that we deal with a system whose density is to leading order flat on the droplet. The latter can however be multiply connected, and this shall be our chief concern. In this particular case, the Zabrodin-Wiegman prediction~\cite{ZabWie-06} (adjusted to take multiple-connectedness into account~\cite{JanManPis-94}) reads
\begin{multline}\label{eq:intro ZabWieg}
\FC_{N,V} = -\frac{1}{4} N \log N - \frac{1}{2}\left( \frac{\log 2\pi}{2} - 1 \right) N - \frac{6-\chi}{24} \log N \\ - \frac{\log (2\pi)}{4} - \chi \frac{\zeta' (-1)}{2} +\frac{1}{4} \log \mathrm{det} _\zeta (\Delta_{\R^2 \setminus \Sigma}) + o_N (1). 
\end{multline}
We refer to~\cite[Section~9.3]{Serfaty-24} or~\cite[Section~5.3]{ForByu-25} for a more detailed account. 

\medskip

\noindent$\bullet$ The $O(N \log N)$ term comes about because a Coulomb self-energy of each individual charge, cut-off at the natural inter-particle distance $\sim N^{-1/2}$ arises when zooming in. This leads to an energy $N \log \left(N^{-1/2} \right)$, to be multiplied by the temperature factor $1/2$ from~\eqref{eq:intro free ener}.  

\smallskip

\noindent$\bullet$ The $O(N)$ term is dictated by local density approximation. It can be recovered from an integral over $\bx \in \Sigma$ of the free-energy density of a jellium at density $\mueq (\bx)$. For a constant density, and at temperature $\beta =2$ (the Ginibre case), this leads to the claimed expression. Minimizing this term is what gives rise to the gaussian free field fluctuations. This can be guessed~\cite{JanManPis-94} by writing an electrostatic energy in terms of the potential $\phi$, the field $\nabla \phi$ and the charge distribution $-\Delta \phi$ (according to Laplace's equation)
$$ -\int_{\R^2} \phi \Delta \phi = \int_{\R^2} \left|\nabla \phi\right|^2$$
and replacing the usual partition function expressed in terms of charge density by a (formal) functional integral 
\begin{equation}\label{eq:FFintegral}
 \int e^{- \int_{\R^2} \left|\nabla \phi\right|^2} D\phi. 
\end{equation}

\smallskip

\noindent$\bullet$ The $\log N$ term has a purely topological origin, in that its prefactor involves only the Euler characteristic of the droplet
$$ \chi := 2 - b = 1 - n$$
where $b$ is the number of boundaries\footnote{For systems on surfaces, the number of handles is also involved.}, $n$ the number of holes, and the equality holds for a connected droplet (hence, a single outer boundary) that we shall restrict to shortly. Noteworthily, the occurrence of such a term in the expansion was conjectured~\cite{JanManPis-94} in analogy with the gaussian free field~\cite{CarPes-88}. Similar terms occur in spectral invariants of the Laplacian on a domain\footnote{``One can hear the number of holes in a drum'', see e.g.~\cite{KeaSin-67} and references therein}, naturally connected to the formal integral~\eqref{eq:FFintegral}.

\smallskip

\noindent$\bullet$ Amongst the $O(1)$ terms, another topological one involving $\chi$ occurs (with the derivative of the Riemann $\zeta$ function as prefactor), but the most interesting is the ($\zeta$-regularized) spectral determinant of the Laplacian in the \emph{exterior} of $\Sigma$, connected to~\eqref{eq:FFintegral}, which is formally the product of Laplacian eigenvalues.

\medskip

Some interesting terms are absent of the above expansion: for a multi-component droplet there are extra oscillatory terms~\cite{AmeChaCro-23,ByuKanSeo-23,AllForLahShe-25,AllLah-25,ByuLeeYan-21,Byun-25,Charlier-24}, and, for other values of the inverse temperature $\beta$ there is a $O\left(N^{1/2}\right)$ term corresponding to a contribution of the droplet's outer boundary. That this term vanishes at $\beta = 2$ is a remarkable prediction of~\cite{CanForTelWie-15}.

In this paper we are particularly interested in getting indications of the topological $\log N$ terms. We cannot however expand directly the free energy with the desired precision, even for the particular model we will define shortly. To make some progress, we instead observe some remarkable consequences of Conjecture~\eqref{eq:intro ZabWieg}.

Let external potentials $V_{1\to n}$ and $V_n$ be chosen so that the corresponding droplets are
\begin{align}\label{eq:intro droplet}
\Sigma_{1\to n} &= D(0,R_{1\to n}) \setminus \bigcup_{k=1} ^n H_k  \nonumber \\
\Sigma_{n} &= D(0,R_n) \setminus H_n
\end{align}
where $D(0,R)$ is the disk of center $0$ and radius $R$ and $H_k,k=1\ldots n$ are $n$ holes puncturing it. Since the total charge is fixed in~\eqref{eq:intro mueq} it must be that
\begin{align}\label{eq:area}
R_{1\to n} &= \sqrt{1 + \pi ^{-1} \sum_{k=1}^n |H_k|} \nonumber\\
R_n &=\sqrt{1 + \pi ^{-1} |H_n|}
\end{align}
Then we should have 
\begin{align}\label{eq:spec det holes}
\log \mathrm{det} _\zeta (\Delta_{\R^2 \setminus \Sigma_{1\to n}}) &= \log \mathrm{det} _\zeta (\Delta_{\R^2 \setminus D(0,R_{1\to n})}) + \sum_{k=1}^n \log \mathrm{det} _\zeta (\Delta_{H_k})\nonumber \\
 &= \frac{1}{3} \log R_{1\to n} + \sum_{k=1}^n \log \mathrm{det} _\zeta (\Delta_{H_k})
\end{align}
where the expression of the contribution of the exterior of $D(0,R_{1\to n})$ is taken from~\cite[Section~6.1]{ZabWie-06} (see also~\cite[Remark~2.3]{ByuSeoYan-24}) and the contributions from the holes are, by translation invariance of the GFF, independent of the locations of the holes within the droplet. From~\eqref{eq:intro ZabWieg} we infer that 

\medskip

\noindent(i) $\FC_{N,V_n}$ is, up to order $o_N(1)$, independent of the location of the hole $H_n$, as long as it stays away from the boundary of  $D(0,R_n)$.

\smallskip

\noindent(ii) The change in correlation energy when adding a hole in the droplet is 
 \begin{multline}\label{eq:ZabWiegplus1} 
\FC_{N,V_{1\to n}} - \FC_{N,V_{1\to n-1}} - \FC_{N,V_{n}} =  \frac{N \log N}{4} + \frac{1}{2}\left( \frac{\log 2\pi}{2} - 1 \right) N \\ + \frac{5\log N}{24} + \frac{\zeta'(-1)}{2} + \frac{1}{12} \log \frac{R_{1\to n}}{R_{1\to n-1} R_n} + o_N (1)
\end{multline}
%
%

\medskip
 
These are the consequences of~\eqref{eq:intro ZabWieg} that we manage to prove, in a particular model with sufficiently small and separated holes. We punch the holes in the droplet following~\cite{RouYng-17} by filling them with a suitable distribution of $M$ unit pinned charges. Our potential $V$ is the sum of a quadratic $|\bx|^2$ term (corresponding to a neutralizing ``jellium'' background and setting the constant value of the density in~\eqref{eq:intro mueq}) and the Coulomb potential generated by these pinned charges. 

The model we obtain this way benefits from a very useful exact formula~\cite{AkeVer-03,Lambert-20,LamLunRou-22}: its free-energy is proportional to the reduced $M$ particles density of the Ginibre ensemble (i.e. the same model, but without pinned charges) with $N+M$ particles, evaluated at the locations of the pinned charges. In this representation the properties above translate to

\medskip

\noindent(i) Said reduced $M$-particles density is, to the desired precision, translation-invariant. This we prove by controlling the error made by replacing, in suitable determinantal expressions, the finite $N+M$ Ginibre correlation kernel by the limiting correlation kernel of the Ginibre process.

\smallskip

\noindent(ii) If the $M$ pinned charges are split in two well-separated groups of $M_1$ and $M_2$ charges (with $M=M_1 + M_2$), the reduced $M$-particles density factorizes (clustering due to the fast decay of the Ginibre correlation kernel) into the individual contributions of the two groups, involving the reduced $M_1$-particles and $M_2$-particles densities, respectively.

\medskip

With a bit of book-keeping in exact formulas and a detailed analysis of the mean-field problem, these properties yield the desired variations of free energy for our conditioned Ginibre ensemble. In the proofs of  both properties, the main difficulty is to obtain reliable estimates with large $M\propto N$, for this is necessary to punch macroscopic holes in the droplet, and thus set the problem in the regime of conjectured applicability of~\eqref{eq:intro ZabWieg}.

\bigskip 

\noindent\textbf{Acknowledgments:} This work benefited from insightful conversations with Alice Guionnet, Gaultier Lambert, Thomas Lebl\'e and Sylvia Serfaty.

\section{Model and results}

We turn to a precise description of our model, the assumptions corresponding to our previous vague statements, and our main results.

\subsection{Pinned charge configuration}\label{sec:asum}

In essence we need the pinned charges to be ``evenly distributed in several sufficiently small and separated clusters''. Since we are defining a particular model on which to check some consequences of~\eqref{eq:intro ZabWieg}, we do not aim at over-optimizing the conditions below.

For two measures $\mu,\nu$ we define their Coulomb interaction energy
\begin{equation}\label{eq:Coul int}
D(\mu,\nu) = - \iint_{\R^2 \times \R^2} \mu(\bx) \log |\bx-\by| \nu (\by) d\bx d\by.  
\end{equation}
For $n\in \N$ and $j=1 \ldots n$ let $\left(\bw_{j,k}\right)_{k= 1 \ldots M_j}$ be $n$ sets of points in the plane. We shall denote
\begin{equation}\label{eq:fraction}
c_j = \frac{M_j}{N}, \quad M =  \sum_{j=1} ^n M_j, \quad c = \sum_{j=1} ^n c_j 
\end{equation}
and assume that each $M_j$ is of order $N$, so that $c_j$ is of order $1$ when $N\to \infty$. One of our key assumptions will be that $c$ is a small enough constant.

The following notion will be useful

\begin{definition}[\textbf{Screening region}]\label{def:screening}\mbox{}\\
We say that $H\subset\R^2$ is a \emph{screening region} for a set of points $\bw_k\in \R^2, k= 1 \dots M$ if 
\begin{equation}\label{eq:screening}
-\log |\,.\,|\star \left(\frac{1}{\pi} \1_{H}- \frac{1}{N}\sum_{k=1} ^M \delta_{\bw_k} \right) \begin{cases}
                                                                                                  = 0 \mbox{ on } H^c \\
                                                                                                  \leq 0 \mbox{ on } H
                                                                                                 \end{cases}
\end{equation}
where $\star$ stands for convolution.

In particular, it must be that 
\begin{equation}\label{eq:screening area}
|H| = \pi \frac{M}{N}
\end{equation}
and that $\bw_k \in H$ for all $k=1\ldots M$.
\end{definition}

Existence of the above region follows from the arguments in~\cite[Section~3]{LieRouYng-17}. Uniqueness was not considered therein but, under our assumptions below, it actually follows from Theorem~\ref{thm:MFE}. Screening regions are also known as subharmonic quadrature domains~\cite{GusSha-95,GusPut-07,Sakai-82}, see the discussion in~\cite[Remark~5.4]{Rougerie-Elliott} for further references. We will show in Section~\ref{sec:MF} below that the screening regions of the charge clusters correspond to the holes in the droplet.

\begin{assumption}[\textbf{Each cluster of charges evenly fills its screening region}]\label{asum:onehole}\mbox{}\\
For all $j=1\ldots n$, denote $H_j$ the screening region that Definition~\ref{def:screening} associates with the set of points $\left(\bw_{j,k}\right)_{k= 1 \ldots M_j}$. We demand  

\smallskip

\noindent \emph{(i) \underline{separation of charges}}. For fixed constants $C_1,C_2 >0$
\begin{equation}\label{eq:ref space}
C_1 M^{-1/2} \leq |\bw_{j,k} - \bw_{j,k'}| \leq C_2 M^{-1/2}. 
\end{equation}
where $\bw_{j,k'}$ is the nearest neighbor of $\bw_{j,k}$ within $\left(\bw_{j,k}\right)_{k= 1 \ldots M_j}$.

\smallskip

\noindent \emph{(ii) \underline{even distribution of energy}}. For large $N$
\begin{align}\label{eq:ref energy}
 \cH_{N} \left(\bw_{j,1}, \ldots, \bw_{j,M_j}\right) &:= \frac{N}{2} \sum_{k=1} ^{M_j} |\bw_{j,k}| ^2 - \sum_{1\leq k < \ell \leq M_j} \log |\bw_{j,k} - \bw_{j,\ell}| \nonumber \\
 &= \frac{N^2}{2\pi} \int_{H_j} |\bx|^2 d\bx + \frac{N ^2}{2\pi ^2} D \left(\1_{H_j},\1_{H_j}\right) - \frac{1}{2} M_j \log M_j + O (M)
\end{align}
where $|O (M)| \leq C M$ for a fixed constant $C>0$.
\end{assumption}

Item (i) ensures that we may always think of the pinned charges as individual ones. As for Item (ii), it implies that the empirical density 
\begin{equation}\label{eq:explain asum}
\rho_{M_j} ^{(1)}:= \sum_{k=1} ^{M_j} \delta_{\bw_{j,k}} \simeq \frac{N}{\pi} \1_{H_j} 
\end{equation}
in the sense of Coulomb energies (self-interaction plus interaction with a harmonic external potential). The local value of the density is the equilibrium one for the minimization of 
$$ \frac{N}{2} \int_{\R^2} |\bx|^2 \rho (\bx) + \frac{1}{2} D (\rho,\rho)$$
and hence the density of points we choose is at equilibrium with/screens a harmonic background potential in $H_j$.

We assume a matching of Coulomb energies only up to order $N\log N$, which fits squarely within the range of known estimates: recall that~\eqref{eq:intro ZabWieg} is known rigorously up to order $N$ for all $\beta$,  including $\beta = \infty$. The existing technology suffices to show that, for example, a regular lattice filling $H_j$ will satisfy both assumptions (see also Remark~\ref{rem:holes} below). At the level of precision demanded in~\eqref{eq:ref energy}, the apparent cyclicity in first defining a screening region associated to the charges, and then assuming that the latter fill it evenly, will not be a concern. For example, if one aims at a roughly disk-shaped $H_j$, a ground state configuration for $\cH_{N}$, suitably translated, will also satisfy our assumptions. 

Next we turn to

\begin{assumption}[\textbf{Clusters of charges are well-separated}]\label{asum:multhole}\mbox{}\\
For all $j=1\ldots n$, with the same notation as above, we demand that there be a disk $D_j$ of radius $R_j$ such that 
\begin{equation}\label{eq:disks}
H_j \subset D_j \mbox{ and } \bw_{j,k} \in D_j \,\mbox{ for all } k= 1 \ldots M_j.
\end{equation}
We impose 
\begin{equation}\label{eq:asumscales}
R_j \leq r_1 \min_{j\neq j'} \mathrm{dist} \left(D_j, D_{j'}\right) \leq r_1 r_2  \min_{j} \mathrm{dist} \left(D_j, \partial D(0,R_n)\right)
\end{equation}
with $r_1,r_2$ two sufficiently small constants and
\begin{equation}\label{eq:radius drop}
R_n := \sqrt{1 + \sum_{j=1}^n c_j}.
\end{equation}
\end{assumption}

From~\eqref{eq:screening area} we have that 
$$ |H_j| = \pi c_j$$
and thus, for disjoint holes, $R_n$ above is the outer radius of the droplet, ensuring a fixed total charge:
$$ \frac{1}{\pi} \left| D(0,R_n) \setminus \bigcup _j H_j \right| = 1.$$ The above assumptions thus mean that the size of the holes must be sufficiently smaller than their mutual distance, which itself must be sufficiently smaller than their distance to the droplet's outer boundary: 
 \begin{equation}\label{eq:ineq points}
 \max_{j,k\neq k'} |\bw_{j,k} - \bw_{j,k'}| \leq r_1 \min_{j\neq j',k,k'}  |\bw_{j,k} - \bw_{j,k'}| \leq r_1 r_2 \min_{j,k} \mathrm{dist} \left(\bw_{j,k},\partial D(0,R_n)\right).
\end{equation}

\subsection{Main results} 

We come to our results, vindicating the consequences of Conjecture~\eqref{eq:intro ZabWieg} we have been discussing in the introduction, for the particular model we just defined. Namely, we look at the partition function appearing in~\eqref{eq:Gibbs}-~\eqref{eq:intro free ener} where the Hamiltonian~\eqref{eq:Hamil} is set as 
\begin{equation}\label{eq:choice pot}
V(\bx) := |\bx| ^2 - \frac{2}{N} \sum_{j=1} ^n \sum_{k=1} ^{M_j} \log \left|\bx - \left(\bw_{j,k} + \ba_j\right)\right|. 
\end{equation}
The first term is the usual quadratic potential for the Ginibre ensemble, leading to a flat local density. The second term is the Coulomb potential generated by several sets of pinned charges as described above. The vectors $\ba_1, \ldots, \ba_n$ are translations that can act on each of the cluster of pinned charges, to investigate the effect of moving holes around. For convenience we regard the reference sets of points $\left(\bw_{j,k}\right)_{k= 1 \ldots M_j}$ as fixed, and only vary the translation vectors $\ba_1, \ldots, \ba_n$. Our running assumption will always be that 
\begin{equation}\label{eq:runasum}
\boxed{\mbox{the $n$ point configurations } \left(\bw_{j,k} +\ba_j \right)_{k= 1 \ldots M_j} \mbox{ satisfy Assumptions~\ref{asum:onehole} and~\ref{asum:multhole}} }
\end{equation}
which can be achieved by asking that the assumptions are satisfied for $\ba_1, \ldots, \ba_n = 0$ and then only considering variations with $|\ba_j|$ small enough for all $j= 1 \ldots n$.

The partition functions we look at are thus in the form
\begin{multline}\label{eq:def part}
\cZ_N (\ba_1,\ldots,\ba_n) := \int_{\R^{2N}}\prod_{1\leq j < k \leq N} |z_j-z_k|^2 e^{-N\sum_{j=1}^N |z_j|^2} \\\times \prod_{\ell = 1} ^N \prod_{j=1} ^n \prod_{k=1}^{M_j} |z_\ell - w_{j,k} - a_j|^2 dz_1 \ldots dz_N 
\end{multline}
where we identify vectors $\bw_{j,k},\ba_j$ with complex numbers $w_{j,k},a_j$. Following the introductions this leads to the free energies and correlation energies
\begin{align}\label{eq:def free}
F_N (\ba_1,\ldots,\ba_n) &= -\frac{1}{2}\log \cZ_N (\ba_1,\ldots,\ba_n)\nonumber\\
\FC_N (\ba_1,\ldots,\ba_n)&= F_N (\ba_1,\ldots,\ba_n) - \Eel (\ba_1,\ldots,\ba_n)
\end{align}
where the mean-field energy $\Eel (\ba_1,\ldots,\ba_n)$ is defined by inserting~\eqref{eq:choice pot} in~\eqref{eq:intro MFE}.

Note that one may think of the above model as an enlarged Ginibre ensemble (no pinned charges, only quadratic external potential) of $N+M$ particles, conditioned on fixing $M$ particles as described above. The asymptotics we are interested in are also related to moments of the Ginibre ensemble's characteristic polynomial and Fisher-Hartwig asymptotics~\cite{AkeVer-03,BouDubHarKel-25,ChaGarXu-26,Lambert-20}

Our first result investigates the correlation energy $\FC_N (\ba)$ for a single hole/cluster of pinned charges. The prediction of~\eqref{eq:intro ZabWieg} in this case is that there is no dependence on $\ba$ up to order $o_N (1)$. Hence the only variations  of $F_N (\ba)$ occur at the macroscopic/mean-field level $N^2$ of the expansion, see Section~\ref{sec:MF} below. 

\begin{theorem}[\textbf{Moving a single hole around the droplet}]\label{thm:onehole}\mbox{}\\
Let $n=1$, i.e. pick $\bw_1,\ldots,\bw_M$ fixed points satisfying Assumption~\ref{asum:onehole} and set
$$ c = \frac{M}{N}$$
and/or $|\ba|$ small enough (which guarantees~\eqref{eq:asumscales} in this case). Then, with the above notation,
\begin{equation}\label{eq:resultonehole}
\left| \FC_N (\ba) - \FC_N (0)\right| \leq o_N (1)
\end{equation}
in the $N\to \infty$ limit, with $\left|o_N (1) \right|\leq e^{-C N}$. 
\end{theorem}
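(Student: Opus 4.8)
### Proof strategy

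The plan is to exploit the exact formula alluded to in the introduction: the partition function $\cZ_N(\ba)$ of the pinned-charge ensemble is, up to an $\ba$-independent prefactor, the $M$-point correlation function of the Ginibre ensemble with $N+M$ particles, evaluated at the $M$ points $\bw_k + \ba$. Concretely, writing $R_{N+M}^{(M)}$ for this correlation function, one has a relation of the schematic form
\begin{equation*}
\cZ_N(\ba) = C_{N,M} \cdot R_{N+M}^{(M)}\bigl(\bw_1+\ba,\ldots,\bw_M+\ba\bigr),
\end{equation*}
where $C_{N,M}$ does not depend on $\ba$ (this is the content of~\cite{AkeVer-03,Lambert-20,LamLunRou-22}, which I would invoke as a stated input). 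Since $\FC_N(\ba) = F_N(\ba) - N^2\Eel(\ba)$ and the mean-field energy $\Eel(\ba)$ is purely a macroscopic object computed from the equilibrium measure, a preliminary reduction — deferred to Section~\ref{sec:MF} — shows $N^2\Eel(\ba)$ is exactly $\ba$-independent once the holes stay in the regime~\eqref{eq:runasum} (the equilibrium measure being $\pi^{-1}\1_{\Sigma}$ with $\Sigma$ merely translated, and the relevant integrals being translation-invariant by~\eqref{eq:area}/\eqref{eq:radius drop}). Hence the theorem reduces to showing that $\log R_{N+M}^{(M)}(\bw_\bullet+\ba)$ varies by at most $o_N(1)$ as $\ba$ ranges over small vectors.

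The determinantal structure gives $R_{N+M}^{(M)}(\bx_1,\ldots,\bx_M) = \det\bigl[K_{N+M}(\bx_i,\bx_j)\bigr]_{1\le i,j\le M}$ where $K_{N+M}$ is the finite-$N$ Ginibre kernel (with the quadratic weight $e^{-N|z|^2}$, appropriately normalized). The key analytic input is that, deep in the bulk (i.e. for points at macroscopic distance from the outer boundary $\partial D(0,R_n)$ — which is exactly what Assumption~\ref{asum:multhole} via~\eqref{eq:asumscales} guarantees), the finite-$N$ kernel is extremely close to the infinite-volume (translation-invariant) Ginibre kernel
\begin{equation*}
K_\infty(\bx,\by) = \frac{N}{\pi}\, e^{N\bigl(x\bar y - \frac{|x|^2}{2} - \frac{|y|^2}{2}\bigr)},
\end{equation*}
with an error that is superpolynomially (in fact super-exponentially, like $e^{-cN}$) small. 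I would make this precise using the standard integral representation / incomplete-Gamma form of the Ginibre kernel and the fact that the distance from each $\bw_k+\ba$ to $\partial D(0,R_n)$ is bounded below by a constant times $N^{-1/2}\cdot N^{1/2}$, i.e. order one on the macroscopic scale. Replacing $K_{N+M}$ by $K_\infty$ in the $M\times M$ determinant introduces a total error that I must control even though $M \propto N$ is large: here one uses the Hadamard-type bound for determinants together with the off-diagonal Gaussian decay $|K_\infty(\bx,\by)| \le \frac{N}{\pi} e^{-\frac{N}{2}|\bx-\by|^2}$ and the separation hypothesis~\eqref{eq:ref space}, which makes the matrix $[K_\infty(\bw_i+\ba,\bw_j+\ba)]$ a small perturbation of a multiple of the identity, so $\det$ is bounded above and below and the replacement error is $e^{-cN}$ relative to the main term. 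The crucial point is that $K_\infty(\bw_i+\ba,\bw_j+\ba)$ depends on $i,j$ only through $\bw_i-\bw_j$ (translation invariance!), so $\det\bigl[K_\infty(\bw_i+\ba,\bw_j+\ba)\bigr]$ is \emph{exactly} independent of $\ba$. Combining, $\log R_{N+M}^{(M)}(\bw_\bullet+\ba) = \log\det[K_\infty(\bw_i-\bw_j)] + O(e^{-cN})$, uniformly in $\ba$, which yields~\eqref{eq:resultonehole}.

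The main obstacle is the uniform control of the determinant when $M$ grows linearly in $N$: a naive Hadamard bound loses a factor $M^{M/2}$, so one cannot afford crude estimates on individual entries. The resolution is to track the ratio $R_{N+M}^{(M)}(\bw_\bullet+\ba)/R_{N+M}^{(M)}(\bw_\bullet)$ rather than the quantity itself, and to show that the \emph{matrix} $D(\ba) := [K_{N+M}(\bw_i+\ba,\bw_j+\ba)]$ satisfies $D(\ba) = D_\infty + E(\ba)$ with $D_\infty = [K_\infty(\bw_i-\bw_j)]$ $\ba$-independent and $\|E(\ba)\|$ (say in operator or Hilbert–Schmidt norm) bounded by $e^{-cN}$ times the smallest singular value of $D_\infty$. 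The latter is bounded below by a constant (independent of $N$, after factoring out $N/\pi$) because the Ginibre bulk kernel at a well-separated point cloud is a strictly diagonally dominant matrix by~\eqref{eq:ref space}; then $\log\det D(\ba) = \log\det D_\infty + \tr\log(I + D_\infty^{-1}E(\ba)) = \log\det D_\infty + O(e^{-cN})$. Assembling this with the mean-field reduction completes the proof; I expect the bulk of the technical work to be the super-exponential boundary estimate on $K_{N+M} - K_\infty$ and its uniformity over the admissible range of $\ba$, which is routine but must be stated carefully since it is reused for the multi-hole clustering result.
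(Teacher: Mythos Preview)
Your approach shares the paper's core idea---exact formula, replace $K_{N+M}$ by $K_\infty$, then exploit translation invariance of the infinite Ginibre process---but contains two genuine errors that happen to cancel, and you do not see the cancellation.

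\textbf{First error.} The exact formula (Lemma~\ref{lem:exp}) is \emph{not} of the form $\cZ_N(\ba)=C_{N,M}\cdot R^{(M)}_{N+M}(\bw_\bullet+\ba)$ with an $\ba$-independent prefactor. It carries an explicit Gaussian factor $\prod_{j=1}^M e^{N|w_j+a|^2}$, which depends on $\ba$ and contributes a term $2cN^2\ba+2N\sum_j\bw_j$ to $\nabla_\ba\log\cZ_N(\ba)$ at order $N^2$.

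\textbf{Second error.} The mean-field energy $\Eel(\ba)$ is \emph{not} $\ba$-independent. Your reasoning that ``$\Sigma$ is merely translated'' is wrong: the droplet is $D(0,R)\setminus H(\ba)$, where only the hole $H(\ba)$ moves while the outer disk stays put; the integrals against $|\bx|^2$ over $H(\ba)$ change with $\ba$. The paper computes $\nabla_\ba\Eel(\ba)=-N\sum_j(\bw_j+\ba)$ in Theorem~\ref{thm:MFE}(ii), and the whole point of the proof is that this exactly matches the gradient coming from the Gaussian factor above (Lemma~\ref{lem:trans}). Only after subtracting $N^2\Eel(\ba)$ does one get an $\ba$-independent quantity; neither piece is invariant on its own.

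\textbf{A secondary gap.} Your plan to bound the smallest singular value of $D_\infty=[\tfrac{\pi}{N}K_\infty(\bw_i,\bw_j)]$ via diagonal dominance is fragile: by~\eqref{eq:ref space} nearest neighbours are at distance $\sim N^{-1/2}$, so $|\tfrac{\pi}{N}K_\infty(\bw_i,\bw_j)|=e^{-O(1)}$ for those pairs, and the off-diagonal row sum is a convergent but not obviously small series $\sum_L CL\,e^{-CL^2}$. The paper avoids this by an indirect argument (Lemma~\ref{lem:low det}): it feeds the a priori free-energy expansion~\eqref{eq:a priori}, known up to order $N$, back through the exact formula to deduce $\det\tfrac{\pi}{N}[K_{N+M}]\ge e^{-C(c-c\log c)N}$. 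This is what allows the $e^{-CN}$ Hadamard-type error from the kernel replacement to be absorbed for $c$ small enough.
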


Although we explicitly consider only translations of the hole/cluster of points, note the following:

\begin{remark}[Rotating the hole]\label{rem:rot}\mbox{}\\
It is clear from~\eqref{eq:def part} that $\FC_N (0)$ is invariant under a joint rotation of $\bw_1,\ldots,\bw_M$ around the origin. Using the theorem above to translate an arbitrary rotation center to the origin, and back to its original location, one deduces that $\FC_N (\ba)$ is also, up to exponentially small remainders, invariant under a joint rotation of all the pinned charges around any center, as long as Assumption~\ref{asum:onehole} and~\eqref{eq:asumscales} hold all along the rotation.\hfill$\diamond$
\end{remark}

In spirit, Theorem~\ref{thm:onehole} is reminiscent of~\cite[Proposition~3.5, Item (i)]{ByuSeoYan-24}, which corresponds to the case where all pinned points are collapsed into a single one, leading to a disk-shaped hole. In as much as the two results can be compared, we work under much more restrictive assumptions on the total pinned charge and its location, but allow for an arbitrarily shaped hole. 

Next we turn to the case of multiple holes:

\begin{theorem}[\textbf{Punching multiple holes in the droplet}]\label{thm:multhole}\mbox{}\\
Pick $n$ configurations of points and $n$ translation vectors so that Assumptions~\ref{asum:onehole} and ~\ref{asum:multhole} hold for the translated point clusters $\bw_{j,k}+\ba_j,k=1 \ldots M_j$. Then 
\begin{align}\label{eq:resultmultipleeholes}
\FC_N (\ba_1, \ldots,\ba_n) &- \sum_{j=1} ^n \FC_N (\ba_j)= \frac{n-1}{4}N\log N + \frac{1}{2} \left( \frac{\log 2\pi}{2} - 1\right) (n-1) N \nonumber\\
&+ \frac{5(n-1)}{24} \log N + (n-1) \frac{\zeta' (-1)}{2} + (n-1) \frac{\log 2\pi}{4} \nonumber\\
&+  \frac{1}{24}\left(\log (1+c) - \sum_{j=1} ^n \log (1+c_j) \right)+ o_N (1)
\end{align}
where the charges $c$ and $c_j,j=1\ldots n$ are as in~\eqref{eq:fraction} and $\left|o_N (1) \right|\leq e^{-C N}$.
\end{theorem}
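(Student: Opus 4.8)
The plan is to exploit the exact determinantal identity tying our model to the Ginibre ensemble (cf.~\cite{AkeVer-03,Lambert-20,LamLunRou-22}). Write $\zeta_1,\dots,\zeta_M$ for the collection of all translated pinned points $\bw_{j,k}+\ba_j$ (running over $j=1,\dots,n$ and $k=1,\dots,M_j$). Completing the Vandermonde in~\eqref{eq:def part} and integrating out the $N$ free variables gives
\[
\cZ_N(\ba_1,\dots,\ba_n)=\frac{N!\,\ZGin_{N+M}}{(N+M)!}\;e^{2\cH_N(\zeta_1,\dots,\zeta_M)}\;\det\bigl[K_{N+M}(\zeta_p,\zeta_q)\bigr]_{p,q=1}^{M},
\]
where $\ZGin_L=\int_{\R^{2L}}\prod_{i<j}|z_i-z_j|^2 e^{-N\sum_i|z_i|^2}=L!\,\pi^{L}N^{-L(L+1)/2}\prod_{k=1}^{L-1}k!$, $\cH_N$ is the pinned-charge Hamiltonian of~\eqref{eq:ref energy}, and $K_{N+M}$ is the Ginibre reproducing kernel for polynomials of degree $<N+M$ in $L^2(e^{-N|z|^2})$; the same identity applied to each single-cluster problem gives $\cZ_N(\ba_j)$ with $M$, $K_{N+M}$, $\cH_N$ replaced by $M_j$, $K_{N+M_j}$, and the restriction of $\cH_N$ to the $j$-th cluster. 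Taking $-\tfrac12\log$, the \emph{intra}-cluster Hamiltonians cancel in $\FC_N(\ba_1,\dots,\ba_n)-\sum_j\FC_N(\ba_j)$, which thus splits into three contributions: (A) the explicit normalization difference $-\tfrac12\log\tfrac{N!\ZGin_{N+M}}{(N+M)!}+\tfrac12\sum_j\log\tfrac{N!\ZGin_{N+M_j}}{(N+M_j)!}$; (B) the \emph{inter}-cluster Coulomb energies $\sum_{p<q}W_{pq}$, with $W_{pq}:=\sum_{i\in B_p,\,l\in B_q}\log|\zeta_i-\zeta_l|$, together with the mean-field correction $-N^2\Eel(\ba_1,\dots,\ba_n)+N^2\sum_j\Eel(\ba_j)$; and (C) the determinant ratio $-\tfrac12\log\det[K_{N+M}(\zeta_p,\zeta_q)]+\tfrac12\sum_j\log\det[K_{N+M_j}(\zeta_p,\zeta_q)]_{p,q\in B_j}$.

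For (B) I would first note that the screening property~\eqref{eq:screening} makes all inter-cluster interactions \emph{exactly} classical: since $H_p\subset H_q^c$ for $p\neq q$, the Coulomb potential of the $j$-th cluster agrees on $H_j^c$ with that of $\tfrac N\pi\1_{H_j}$, and applying this twice yields $W_{pq}=-\tfrac{N^2}{\pi^2}D(\1_{H_p},\1_{H_q})$ with no error term. On the other hand, the mean-field analysis (Section~\ref{sec:MF}) evaluates $N^2\Eel$ for a droplet $D(0,R)\setminus H$ (with $|H|=\pi(R^2-1)$) explicitly as the jellium energy of $D(0,R)$ minus that of $H$ minus $\tfrac{N^2}{\pi^2}\sum_{p<q}D(\1_{H_p},\1_{H_q})$; in the difference forming line (B) the hole self-energies cancel, the $\sum_{p<q}$ terms cancel against $\sum_{p<q}W_{pq}$, and — since the $N^2$ part of $-\tfrac12\log\tfrac{N!\ZGin_L}{L!}$ coincides with the jellium energy of $D(0,\sqrt{L/N})$ — the disk energies cancel the $N^2$ part of (A). Hence (A)+(B) reduces to the sub-$N^2$ part of the difference of the explicit expansions of $-\tfrac12\log\tfrac{N!\ZGin_L}{L!}$ for $L=N+M$ versus $L=N+M_j$. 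Inserting Stirling's formula and the Barnes-type asymptotics $\log\prod_{k=1}^{L-1}k!=\tfrac{L^2}2\log L-\tfrac34 L^2+\tfrac L2\log 2\pi-\tfrac1{12}\log L+\zeta'(-1)+o(1)$, and using $\sum_j(1+c_j)=n+c$, I expect these surviving terms to be exactly $\tfrac{n-1}4 N\log N$, $\tfrac12(\tfrac{\log2\pi}2-1)(n-1)N$, $\tfrac{5(n-1)}{24}\log N$, $(n-1)\tfrac{\zeta'(-1)}2$, $(n-1)\tfrac{\log2\pi}4$, while the $-\tfrac1{12}\log L$ part (with $L=N+M$ or $N+M_j$) produces the topological $\tfrac1{24}\bigl(\log(1+c)-\sum_j\log(1+c_j)\bigr)$.

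The heart of the argument is that (C) is $o_N(1)$: the $M\times M$ Ginibre correlation determinant should cluster into the product of its diagonal blocks, and within each block the finite-$N$ kernel should be replaceable by the limiting Ginibre kernel $K_\infty(z,w)=\tfrac N\pi e^{-\frac N2|z-w|^2}e^{\i N\operatorname{Im}(z\bar w)}$, whose determinants are translation invariant. Reading $[K_{N+M}(\zeta_p,\zeta_q)]$ as the Gram matrix of the coherent states $K_{N+M}(\cdot,\zeta_p)$: by Assumption~\ref{asum:multhole} the clusters are macroscopically separated and all $\zeta_p$ lie well inside the bulk of the $(N+M)$-droplet $D(0,\sqrt{1+c})$, so a bulk kernel bound $|K_{N+M}(z,w)|\le C\tfrac N\pi e^{-c N|z-w|^2}$ forces the off-diagonal blocks to be $O\bigl(\tfrac N\pi e^{-cN\delta^2}\bigr)$ with $\delta$ the minimal inter-cluster distance, and a block-determinant (Schur-complement) estimate should give $\det[K_{N+M}(\zeta_p,\zeta_q)]=\prod_j\det[K_{N+M}(\zeta_p,\zeta_q)]_{p,q\in B_j}\,(1+o_N(1))$; likewise the tail $\tfrac N\pi e^{-\frac N2(|z|^2+|w|^2)}\sum_{k\ge L}(Nz\bar w)^k/k!$ controlling $K_L-K_\infty$ is $O(e^{-\kappa N})$ on each block, so each block determinant should agree, up to $1+o_N(1)$, with the corresponding $K_\infty$-block determinant — the same object in both ensembles, since cluster $j$ sits at the same position in the multi-hole problem and in $\FC_N(\ba_j)$ and, for $r_1,r_2$ small, also lies inside $D(0,\sqrt{1+c_j})$ by a macroscopic margin.

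The hard part, and the sole place where the smallness of $r_1,r_2$ (and of $c$) is genuinely used, will be propagating these exponentially small errors through an $M\times M$ determinant with $M$ of order $N$. The clustering and universality errors are of the form $M^{O(1)}e^{-cN\delta^2}$ and $M^{O(1)}e^{-\kappa N}$, with $\delta^2$ at least of order $c_j/r_1^2$ and $\kappa$ at least of order $c_j/(r_1^2 r_2^2)$, whereas the block determinants themselves may be exponentially small, of order $(\tfrac N\pi)^{M_j}e^{-O(M_j)}$ with $M_j$ of order $c_j N$; the decay exponents beat the $O(M_j)$ loss precisely when $r_1,r_2$ are small, and controlling the $e^{-O(M_j)}$ loss — equivalently, lower bounding the least eigenvalue of the normalized coherent-state Gram matrices — is where the near-neighbour spacing bound of Assumption~\ref{asum:onehole}(i) comes in. (If needed, Theorem~\ref{thm:onehole} can be invoked to reposition each single-cluster problem, the translation invariance of $\det K_\infty$ being precisely the mechanism behind that theorem.) Everything else is either classical electrostatics or the explicit asymptotics of $\ZGin_L$.
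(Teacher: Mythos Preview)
Your outline is essentially the paper's proof, reorganised: the paper first proves a free-energy comparison (Proposition~\ref{pro:freener}) via the exact formula, the kernel replacement $K_{N+M}\to K_\infty$ (Lemma~\ref{lem:replace ker}), and the decoupling of the big determinant (Lemma~\ref{lem:decouple}), and only then subtracts the mean-field energies using Theorem~\ref{thm:MFE}(iii); your (A)+(B)+(C) split does the same bookkeeping in a different order, and your observation that the inter-cluster interaction $W_{pq}$ equals $-\tfrac{N^2}{\pi^2}D(\1_{H_p},\1_{H_q})$ exactly by double use of screening is precisely what makes the $N^2$ cancellation in (B) go through (the paper computes this inside the proof of Theorem~\ref{thm:MFE}(iii)). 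For the clustering step the paper does not use a Schur-complement argument but a direct permutation expansion~\eqref{eq:combinatoire} together with Hadamard's inequality, controlling each off-diagonal ``link'' by $e^{-Cd^2N}$ and each diagonal block by $C^{cN}$; your Schur-complement route should also work, with comparable smallness requirements on $r_1,r_2$.

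The one place where your intuition diverges from the paper is the lower bound on the block determinants. You attribute it to the nearest-neighbour spacing bound (Assumption~\ref{asum:onehole}(i)), via a least-eigenvalue estimate for the coherent-state Gram matrix. The paper does \emph{not} proceed this way: Lemma~\ref{lem:low det} obtains the lower bound $\det\ge e^{-C(c-c\log c)N}$ indirectly, by combining the exact formula~\eqref{eq:pouet} with the a~priori upper bound~\eqref{eq:a priori} on the free energy and the Coulomb-energy hypothesis (Assumption~\ref{asum:onehole}(ii)). The spacing assumption (i) is used only for the Hadamard \emph{upper} bounds~\eqref{eq:use Hadamard}--\eqref{eq:use Hadamard 2} on column norms. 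Whether a direct Gram-matrix eigenvalue bound from spacing alone would yield the required $e^{-o(N)}$ lower bound uniformly in $M\sim cN$ is not obvious; the paper's route via the energy assumption is the safer path.
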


Combining with Theorem~\ref{thm:onehole} and Remark~\ref{rem:rot} shows that, at least as long as the holes are sufficiently small and separated, the free energy depends on their locations and relative orientations only through the mean-field term. Our estimate~\eqref{eq:resultmultipleeholes} is  an iterated version of~\eqref{eq:ZabWiegplus1}. As regards the last line, to compare with~\eqref{eq:area} and~\eqref{eq:spec det holes}, recall from~\eqref{eq:radius drop} that $\sqrt{1+c}$ and $\sqrt{1+c_j}$ are the outer radii of the droplets with all holes present, respectively only the $j$-th one.  

Let us clarify how the above is suggestive of the occurrence of topological terms in the plasma's free energy:

\begin{remark}[Topological terms]\label{rem:top}\mbox{}\\
For simplicity, assume that the $n$ clusters of points are identical, leading to $n$ similarly-shaped holes that we can translate and rotate inside the droplet, as long as~\eqref{eq:ineq points} stays valid. If we \emph{postulate} that the next term after the rigorously known $O(N)$ ones in the expansion of $\FC_N (\ba_1, \ldots,\ba_n)$ is indeed of order $\log N$, we get an expansion of the form
\begin{align*}
 \FC_N (\ba_1, \ldots,\ba_n) &= -\frac{1}{4} N \log N - \frac{1}{2}\left( \frac{\log 2\pi}{2} - 1 \right) N + c_{1\to n} \log N +o (\log N)\\
 \FC_N (\ba_1) &= -\frac{1}{4} N \log N - \frac{1}{2}\left( \frac{\log 2\pi}{2} - 1 \right) N + c_{1} \log N + o (\log N).
\end{align*}
Then it follows from Theorem~\ref{thm:onehole} that for all $j=1\ldots n$
$$\FC_N (\ba_j) = -\frac{1}{4} N \log N - \frac{1}{2}\left( \frac{\log 2\pi}{2} - 1 \right) N + c_{1} \log N + o (\log N)$$
and from Theorem~\ref{thm:multhole} that 
$$ c_{1\to n} =  \left( c_1 + \frac{5}{24}\right) n - \frac{5}{24}.$$
Hence the $\log N$ term must indeed be topological in nature. It might still be ``trivially topological'', i.e. universal, if it so happens that 
$$c_1 = -\frac{5}{24}. $$
In view of known results for models (different from ours) with a single hole (see aforementioned references, in particular~\cite{ByuSeoYan-24}), this seems quite unlikely. It is much more natural to expect that 
$$ c_1 = -\frac{1}{4}$$
leading to  
$$c_{1\to n} = - \frac{n+5}{24} = - \frac{6-\chi}{24}$$ 
as predicted by Conjecture~\eqref{eq:intro ZabWieg}, so that the $\log N$ term indeed counts the number of holes in the droplet. Similar considerations apply to other topological terms at level $O(1)$ in the expansion.\hfill $\diamond$
\end{remark}

\bigskip 

The rest of the paper, containing the proofs of Theorems~\ref{thm:onehole} and~\ref{thm:multhole}, is organized as follows:
\begin{itemize}
 \item In Section~\ref{sec:MF} we set up preliminary estimates on the mean-field approximation of the problem. These show that the variations we will later find in $F_N(\ba_1,\ldots,\ba_n)$ are all accounted for by those of the mean-field energy.
 \item In Section~\ref{sec:proofone} we prove Theorem~\ref{thm:onehole}. In particular, we recap the representation of the partition function in terms of a Ginibre correlation function. Our assumption~\eqref{eq:ref energy} implies useful a priori bounds on the later, that will enter all subsequent estimates. In particular when replacing finite area Ginibre correlation functions by infinite area ones, which is the next big task of the section. 
 \item In Section~\ref{sec:proofmult} we prove Theorem~\ref{thm:multhole}. Following on the representation just mentioned, this boils down to an exponential clustering estimate for Ginibre correlation functions, and a careful computation to identify constant terms in expansions. We rely heavily on the determinal structure for the clustering estimate.
 \item For the convenience of the reader, Appendix~\ref{app:Ginibre} recalls known facts about the Ginibre partition function and correlation functions.
\end{itemize}

\section{Mean-field considerations}\label{sec:MF}

Here we study the mean-field approximation of the model described above. In particular we investigate how the ground state energy depends on movements of a cluster of pinned charges and/or the addition of a cluster. This will be useful later, in comparison with the behavior of the full many-body problem, to reconstruct the desired behavior of the correlation energy.

Let $\bw_1,\ldots,\bw_M$ be $M$ points in the plane. We consider the mean-field energy functional 
\begin{multline}\label{eq:MFE bis}
\cEel [\sigma] := \frac{NJ}{2}\int_{\R^2} \left(|\bx|^2 - \frac{2}{N} \sum_{k=1}^M \log |\bx - \bw_k| \right) \sigma (\bx) d\bx \\
- \frac{J^2}{2}\iint_{\R^2 \times \R^2}\sigma(\bx) \log |\bx-\by| \sigma(\by) d\bx d\by
\end{multline}
for parameters $N>0,J>0,M\in \N$ and pinned charges $\bw_k\in \R^2, k= 1 \dots M$. The associated minimization problem is 
\begin{equation}\label{eq:MFE min}
\Eel = \inf \left\{ \cEel [\sigma], \sigma \in L^2 (\R^2), \sigma \geq 0, \int_{\R^2} \sigma = 1 \right\}.
\end{equation}
For simplicity we do not indicate the extra parameter $J$  in the notation (it was set as $J=N$ in the previous sections).  Varying $J$ will be helpful because we will need later to consider ensembles with the same background charge density (set by the real parameter $N$ in front of the $|\bx|^2$ term from~\eqref{eq:Hamil}-~\eqref{eq:choice pot}) but different particle numbers (set by the number $N$ of terms in the sums of~\eqref{eq:Hamil}).

%
%

Regarding the minimization of the mean-field energy~\eqref{eq:MFE bis} we will need the following result. In particular, observe that, although the uniqueness of a screening region was not discussed in Definition~\ref{def:screening}, it follows from Item (i) below. 

\begin{theorem}[\textbf{The mean-field problem}]\label{thm:MFE}\mbox{}\\
\noindent \emph{(i) \underline{equilibrium measure.}} Assume that the screening region $H$ associated to $\bw_k\in \R^2, k= 1 \dots M$ by Definition~\ref{def:screening} satisfies 
\begin{equation}\label{eq:screening subset}
H \subset D(0,R), \mbox{ with } R = \sqrt{\frac{J}{N} + \frac{M}{N}}.
\end{equation}
Then the unique solution $\mueq$ of~\eqref{eq:MFE min} is given by 
\begin{equation}\label{eq:MFsol}
\mueq = \frac{N}{\pi J} \1_{D(0,R) \setminus H}
\end{equation}
and the associated minimal energy is 
\begin{align}\label{eq:MFsol2}
\Eel &= \frac{1}{2} C_R - \frac{J^2}{2} D(\mueq,\mueq) \nonumber\\
C_R &= NJ R^2 - 2NJ R^2 \log R 
\end{align}
\noindent \emph{(ii) \underline{translating the pinned charges.}} Let $\ba \in \R^2$ and denote $\Eel(\ba)$ the minimal energy corresponding to the points $\bw_1+ \ba, \dots, \bw_M + \ba$. As long as~\eqref{eq:screening subset} holds for the associated screening region $H(\ba)$ we have that 
\begin{equation}\label{eq:grad MF}
\nabla_\ba \Eel(\ba) = - N \sum_{j=1} ^M \left( \bw_j + \ba \right)
\end{equation}
\noindent \emph{(iii) \underline{adding a cluster of pinned charges.}} Assume in addition that the points $\bw_k\in \R^2, k= 1 \dots M$ can be split into two groups of $M_1$ points $\bw_{1,j}, j=1 \ldots M_1$ and $M_2$ points $\bw_{2,j}, j=1 \ldots M_2$, with screening regions $H_1,H_2$ respectively. 

Assume that $H_1 \cap H_2 = \varnothing$. Let $\Eel_{12},\Eel_1,\Eel_2$ denote the infima of~\eqref{eq:MFE bis} with all the points taken into account, and with respectively only the points of the first or second group. Let correspondingly $R_{12},R_1,R_2,C_{R_{12}},C_{R_1},C_{R_2}$ be defined as above. Then 
\begin{align}\label{eq:split MFE}
 \Eel_{12} - \Eel_1 - \Eel_2 &= \frac{1}{2}\left(C_{R_{12}} -C_{R_{1}} - C_{R_{2}}\right) \nonumber\\
 &+ \sum_{j=1}^{M_1} \sum_{k=1}^{M_2} \log |\bw_{1,j} - \bw_{2,k}|\nonumber\\ 
 &-\frac{N^2}{2}\left(\frac{R_{12}^4}{4} -R_{12}^4 \log R_{12} - \frac{R_{1}^4}{4} + R_{1}^4 \log R_{1}- \frac{R_{2}^4}{4}+ R_{2}^4 \log R_{2}\right)\nonumber\\
 &-  M_1 N \left(R_{12}^2 \log R_{12} - \frac{R_{12}^2}{2} - R_{1}^2 \log R_{1} + \frac{R_{1}^2}{2}\right) \nonumber\\
 &- {M_2 N} \left(R_{12}^2 \log R_{12} - \frac{R_{12}^2}{2} - R_{2}^2 \log R_{2} + \frac{R_{2}^2}{2}\right).
\end{align}
\end{theorem}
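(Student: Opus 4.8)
The theorem hinges on part~(i), from which (ii) and (iii) follow by differentiation and bookkeeping. For~(i) I would argue as in classical potential theory: $\cEel$ is strictly convex on probability densities (its quadratic part is the Coulomb energy $\sigma\mapsto D(\sigma,\sigma)$, positive definite on differences of admissible densities), so the minimiser is unique and is characterised by the Euler--Lagrange conditions: there is $\ell\in\R$ with $\zeta(\bx):=\tfrac{NJ}{2}\bigl(|\bx|^2-\tfrac2N\sum_k\log|\bx-\bw_k|\bigr)+J^2\bigl(-\log|\,.\,|\star\sigma\bigr)(\bx)$ equal to $\ell$ on $\supp\sigma$ and $\ge\ell$ a.e.\ off it. I would then verify these for the candidate $\mueq=\tfrac{N}{\pi J}\1_{D(0,R)\setminus H}=\tfrac{N}{\pi J}(\1_{D(0,R)}-\1_H)$ (legitimate since $H\subset D(0,R)$; note $\mueq$ is a probability density because $|H|=\pi M/N$ and $R^2=(J+M)/N$). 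Two inputs are needed: the explicit uniform-disk potential $\bigl(-\log|\,.\,|\star\tfrac1\pi\1_{D(0,R)}\bigr)(\bx)=\tfrac{R^2-|\bx|^2}{2}-R^2\log R$ on $D(0,R)$ and $-R^2\log|\bx|$ outside, and the defining property~\eqref{eq:screening} of the screening region. Substituting $\mueq$ into $\zeta$, the logarithmic terms cancel and $\zeta(\bx)=\tfrac{NJ}{2}|\bx|^2+NJ\bigl(-\log|\,.\,|\star\tfrac1\pi\1_{D(0,R)}\bigr)(\bx)-NJ\,g(\bx)$ with $g\le 0$ the screening function of~\eqref{eq:screening}, which vanishes off $H$. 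On $D(0,R)\setminus H=\supp\mueq$ this equals the constant $NJ(\tfrac{R^2}{2}-R^2\log R)=\tfrac12 C_R$; on $H$ it is $\ge\tfrac12 C_R$ since $g\le 0$; and outside $D(0,R)$ it is $\ge\tfrac12 C_R$ by the elementary bound $\tfrac{t^2-1}{2}\ge\log t$ at $t=|\bx|/R\ge1$. Hence $\mueq$ is the minimiser. The energy formula then follows by integrating $\zeta=\ell$ against $\mueq$: $\tfrac{NJ}{2}\int W\mueq+J^2D(\mueq,\mueq)=\tfrac12 C_R$, so $\Eel=\cEel[\mueq]=\tfrac{NJ}{2}\int W\mueq+\tfrac{J^2}{2}D(\mueq,\mueq)=\tfrac12 C_R-\tfrac{J^2}{2}D(\mueq,\mueq)$.

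\textbf{Part (ii).} First, the screening region is translation-covariant, $H(\ba)=H(0)+\ba$: condition~\eqref{eq:screening} for the translated points at $\bx$ is exactly~\eqref{eq:screening} for the original points at $\bx-\ba$, and the screening region is unique (e.g.\ because the EL minimiser in~(i) is). By~(i), $\Eel(\ba)=\tfrac12 C_R-\tfrac{J^2}{2}D(\mueq^{(\ba)},\mueq^{(\ba)})$ with $R$ fixed and $\mueq^{(\ba)}=\tfrac{N}{\pi J}(\1_{D(0,R)}-\1_{H(0)+\ba})$. Expanding the Coulomb energy bilinearly, $D(\1_{D(0,R)},\1_{D(0,R)})$ carries no $\ba$, and $D(\1_{H(0)+\ba},\1_{H(0)+\ba})=D(\1_{H(0)},\1_{H(0)})$ is translation invariant, so only $D(\1_{D(0,R)},\1_{H(0)+\ba})=\pi\int_{H(0)}\bigl(\tfrac{R^2-|\by+\ba|^2}{2}-R^2\log R\bigr)\,d\by$ (valid since $H(\ba)\subset D(0,R)$) depends on $\ba$; differentiating, only $-\tfrac{\pi}{2}\nabla_\ba\int_{H(0)}|\by+\ba|^2\,d\by=-\pi\bigl(\int_{H(0)}\by\,d\by+|H(0)|\,\ba\bigr)$ survives. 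Collecting prefactors gives $\nabla_\ba\Eel(\ba)=-\tfrac{N^2}{\pi}\bigl(\int_{H(0)}\by\,d\by+|H(0)|\,\ba\bigr)$. It remains to read off the zeroth and first moments of $H(0)$ from~\eqref{eq:screening}: $-\log|\,.\,|\star\tfrac1\pi\1_{H(0)}$ and $-\log|\,.\,|\star\tfrac1N\sum_k\delta_{\bw_k}$ coincide on a neighbourhood of infinity, so matching the $\log|\bx|$ and $|\bx|^{-1}$ terms of their multipole expansions yields $|H(0)|=\pi M/N$ and $\int_{H(0)}\by\,d\by=\tfrac{\pi}{N}\sum_k\bw_k$; substituting gives $\nabla_\ba\Eel(\ba)=-N\sum_{k=1}^M(\bw_k+\ba)$.

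\textbf{Part (iii).} Assume (as is needed for~(i) to apply to the subsystems) that $H_i\subset D(0,R_i)$; then also $H_1\cup H_2\subset D(0,R_{12})$ since $R_i\le R_{12}$. When $H_1\cap H_2=\varnothing$, the set $H_1\cup H_2$ is a screening region for the union of the two clusters, the associated screening function being $g_1+g_2$, equal to $0$ on $H_1^c\cap H_2^c$, to $g_1\le 0$ on $H_1\subset H_2^c$, and to $g_2\le 0$ on $H_2\subset H_1^c$; by uniqueness $H_{12}=H_1\cup H_2$ and $\mueq^{12}=\tfrac{N}{\pi J}(\1_{D(0,R_{12})}-\1_{H_1}-\1_{H_2})$. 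I would then write $\Eel_{12},\Eel_1,\Eel_2$ through the formula of~(i) and subtract. The $\tfrac12(C_{R_{12}}-C_{R_1}-C_{R_2})$ term is immediate; in $-\tfrac{J^2}{2}D(\mueq,\mueq)$, expanded bilinearly, the self-terms $D(\1_{H_i},\1_{H_i})$ cancel between the combined and the single-cluster systems; the disk self-energies $D(\1_{D(0,R)},\1_{D(0,R)})=\pi^2 R^4(\tfrac14-\log R)$ give the third line of~\eqref{eq:split MFE}; the disk--hole terms $D(\1_{D(0,R)},\1_{H_i})=\pi\int_{H_i}\bigl(\tfrac{R^2-|\by|^2}{2}-R^2\log R\bigr)d\by$, combined as $-2D(\1_{D(0,R_{12})},\1_{H_i})+2D(\1_{D(0,R_i)},\1_{H_i})$, shed their $\int_{H_i}|\by|^2$ parts and, with $|H_i|=\pi M_i/N$, produce the fourth and fifth lines; and the one genuinely new quantity, $D(\1_{H_1},\1_{H_2})$, is evaluated by applying~\eqref{eq:screening} twice: on $H_2\subset H_1^c$ one has $-\log|\,.\,|\star\tfrac1\pi\1_{H_1}=-\tfrac1N\sum_j\log|\,.\,-\bw_{1,j}|$, and then since each $\bw_{1,j}\in H_1\subset H_2^c$ one gets $\int_{H_2}\log|\bw_{1,j}-\by|\,d\by=\tfrac{\pi}{N}\sum_k\log|\bw_{1,j}-\bw_{2,k}|$, so $D(\1_{H_1},\1_{H_2})=-\tfrac{\pi^2}{N^2}\sum_{j,k}\log|\bw_{1,j}-\bw_{2,k}|$, yielding the $\sum_{j,k}\log|\bw_{1,j}-\bw_{2,k}|$ term. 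Assembling the pieces is exactly~\eqref{eq:split MFE}.

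\textbf{Main obstacle.} The one genuinely delicate step is the Euler--Lagrange verification in~(i): the explicit uniform-disk potential and the two-sided inequality~\eqref{eq:screening} must be combined so as to see simultaneously that the effective potential $\zeta$ is \emph{exactly} the constant $\tfrac12 C_R$ on the droplet $D(0,R)\setminus H$ and \emph{no smaller} on the holes and in the exterior of the disk. Parts (ii) and (iii) are then bookkeeping on top of the energy formula, the only non-routine moves being the dipole (centroid) matching forced by~\eqref{eq:screening} in~(ii) and the double application of the screening property to the hole--hole interaction in~(iii).
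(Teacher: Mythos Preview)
Your proposal is correct and follows essentially the same line as the paper: verify the Euler--Lagrange conditions for the candidate $\mueq$ using the explicit disk potential and the screening property~\eqref{eq:screening}, then deduce~(ii) and~(iii) by expanding $D(\mueq,\mueq)$ bilinearly and tracking which pieces depend on $\ba$ or on the second cluster. The only noteworthy variation is in~(ii): the paper uses the screening identity to replace $\1_{H(\ba)}$ by the empirical measure $\tfrac{\pi}{N}\sum_k\delta_{\bw_k+\ba}$ \emph{before} differentiating (so that $D(\1_{D(0,R)},\mathrm{Emp}^\ba)$ is the only $\ba$-dependent piece and its gradient is read off directly from Newton), whereas you keep $\1_{H(\ba)}$, differentiate $D(\1_{D(0,R)},\1_{H(0)+\ba})$ via the disk potential, and only at the end invoke the far-field multipole matching forced by~\eqref{eq:screening} to identify $\int_{H(0)}\by\,d\by=\tfrac{\pi}{N}\sum_k\bw_k$; both routes are equally short and rigorous.
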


\begin{remark}[Shaping the holes]\label{rem:holes}\mbox{}\\
It can be helpful to compare with~\cite{RouYng-17}, whose construction of droplets with arbitrary holes inspires the present one. In Proposition~3.1 and Section 3.4 therein it was proved that a droplet with arbitrary, fixed, holes (say a set $\widetilde{H}$) can be approximated by the minimizer of the mean-field problem with many individual charges pinned on a lattice filling the holes. In view of the above, this implies that, with $H$ the screening region of the pinned charges, $H \to \widetilde{H}$ when the lattice spacing goes to $0$ (in a topology and with a rate of convergence that we do not make precise for brevity). This shows how one can construct a droplet whose holes $H$ are close to any desired shape $\widetilde{H}$, by using only the potential generated by individual pinned charges. \hfill $\diamond$
\end{remark}

\begin{proof}[Proof of Theorem~\ref{thm:MFE}]
Existence and uniqueness of a minimizer $\mueq$ is standard for this convex functional, see e.g.~\cite[Chapter~1]{SafTot-97} or~\cite[Chapter~2]{Serfaty-24}. The Euler-Lagrange equation takes the form 
\begin{align}\label{eq:ELeq}
NJ |\bx|^2 -2 J \log |\, . \,| \star \left(J \mueq + \sum_{k=1} ^M \delta_{\bw_j} \right)&= C \mbox{ on } \mathrm{supp} (\mueq) \nonumber\\
NJ |\bx|^2 -2 J \log |\, . \,| \star \left(J \mueq + \sum_{k=1} ^M \delta_{\bw_j} \right)&\geq C \mbox{ on } \mathrm{supp} (\mueq) ^c
\end{align}
for a constant $C\in \R$ (Lagrange multiplier for the mass constraint). A useful characterization~\cite[Theorem~3.3, page 44]{SafTot-97} is that if~\eqref{eq:ELeq} holds for some probability measure $\mueq$ and some constant $C$, then $\mueq$ must be the unique minimizer. We thus argue that~\eqref{eq:MFsol} satisfies this, with $C= C_R$ as in~\eqref{eq:MFsol2}.  

First observe that~\eqref{eq:screening subset} and~\eqref{eq:screening area} imply that~\eqref{eq:MFsol} indeed is a probability measure. Next it follows from Newton's theorem (see~\cite[Theorem~9.7]{LieLos-01}) that
\begin{equation}\label{eq:Newton} 
-\frac{1}{\pi} \log |\, . \,| \star  \1_{D(0,R)} (\bx) = \begin{cases} - R^2 \log |\bx| \mbox{ for } |\bx| \geq R\\
                                       - \frac{|\bx|^2}{2} + \frac{R^2}{2} -R^2 \log R \mbox{ for } |\bx| \leq R.
                                      \end{cases}
\end{equation}
Combining with~\eqref{eq:screening} and observing that 
$$
NJ r^2 -2 NJ R^2 \log r \geq C_R \mbox{ for } r\geq R
$$
we find that 
\begin{multline*}
NJ |\bx|^2 -2 J \log |\, . \,| \star \left(J \mueq + \sum_{k=1} ^M \delta_{\bw_j} \right) \\= NJ \left(|\bx|^2 - \frac{2}{\pi} \log |\, . \,| \star \1_{D(0,R)}\right) - 2 J \log |\, . \,| \star \left(\sum_{k=1} ^M \delta_{\bw_j} - \frac{N}{\pi} \1_{H}\right)
\end{multline*}
indeed satisfies the desired conditions~\eqref{eq:ELeq}. Multiplying those by $\mueq$ and integrating we find the expression of the energy in~\eqref{eq:MFsol2}, thus concluding the proof of Item (i).

We turn to Item (ii). Let $\mueq^\ba$ be the equilibrium measure corresponding to the pinned charges at $\bw_1 (\ba) = \bw_1 + \ba, \dots, \bw_M (\ba) = \bw_M + \ba$. From~\eqref{eq:MFsol2} we have that 
$$ \nabla_\ba \Eel (\ba) = - \frac{N^2}{2 \pi^2} \nabla_\ba D\left( \1_{D(0,R) \setminus H(\ba)}, \1_{D(0,R) \setminus H(\ba)}\right).$$
Denote 
$$ \mathrm{Emp}^\ba := \frac{\pi}{N} \sum_{j=1}^M \delta_{\bw_j (\ba)} $$
and write 
\begin{align*}
D\left( \1_{D(0,R) \setminus H(\ba)}, \1_{D(0,R) \setminus H(\ba)}\right) &= D\left( \1_{D(0,R)}, \1_{D(0,R)}\right) + D\left( \1_{H(\ba)}, \1_{H(\ba)}\right) - 2 D\left( \1_{D(0,R) }, \1_{H(\ba)}\right) \\
&= D\left( \1_{D(0,R)}, \1_{D(0,R)}\right) + D\left( \1_{H(\ba)}, \1_{H(\ba)}\right) - 2 D\left( \1_{D(0,R) \setminus H(\ba)}, \1_{H(\ba)}\right) \\
&- 2 D\left( \1_{H(\ba)}, \1_{H(\ba)}\right) \\
&=  D\left( \1_{D(0,R)}, \1_{D(0,R)}\right) - D\left( \1_{H(\ba)}, \1_{H(\ba)}\right) - 2 D\left( \1_{D(0,R) \setminus H(\ba)},  \mathrm{Emp}^\ba \right)\\
&= D\left( \1_{D(0,R)}, \1_{D(0,R)}\right) - D\left( \1_{H(\ba)}, \1_{H(\ba)}\right) + 2 D\left( \1_{H(\ba)},  \mathrm{Emp}^\ba \right) \\
&- 2 D\left( \1_{D(0,R)},  \mathrm{Emp}^\ba \right)
\end{align*}
where we used~\eqref{eq:screening} to get the third equality. It follows from Definition~\ref{def:screening} that $H(\ba)$ is just $H(0)$ translated by $\ba$. Hence only the very last term of the right-hand side does depend on~$\ba$. Recalling~\eqref{eq:Newton} we find that 
\begin{align*}
\nabla_\ba D\left( \1_{D(0,R)},  \mathrm{Emp}^\ba \right) &= - \frac{\pi^2}{2N}  \nabla_\ba \left( \sum_{j=1}^M |\bw_j + \ba|^2\right) \\
&= - \frac{\pi^2}{N} \sum_{j=1} ^M \left( \bw_j + \ba \right).
\end{align*}
Combining with the two previous equations gives~\eqref{eq:grad MF}.


As regards Item (iii), first note that since $H_1 \cap H_2 = \varnothing$ we have from Definition~\ref{def:screening} that $H= H_1 \cup H_2$ is a screening region for the total set of points $\bw_k\in \R^2, k= 1 \dots M$. Hence~\eqref{eq:MFsol2} and~\eqref{eq:MFsol} lead to 
\begin{align*} 
2 \Eel_{12} &= C_{R_{12}}- \frac{N^2}{\pi^2} \left( D\left(\1_{D(0,R_{12})},\1_{D(0,R_{12})}\right)  - 2 D\left(\1_{D(0,R_{12})},\1_{H_1} \right) - 2D\left(\1_{D(0,R_{12})},\1_{H_2} \right) \right) \\ 
&- \frac{N^2}{\pi^2} \left(D\left(\1_{H_1},\1_{H_1}\right) + D\left(\1_{H_2},\1_{H_2}\right) + 2 D\left(\1_{H_1},\1_{H_2}\right) \right) 
\end{align*}
with related expressions for $\Eel_{1}, \Eel_{2}$. Hence 
\begin{align}\label{eq:MFEdiff}
2\left(\Eel_{12} - \Eel_{1} - \Eel_{2}\right) &=  C_{R_{12}} - C_{R_{1}} - C_{R_{2}} \nonumber\\
&- \frac{N^2}{\pi^2} \left( D\left(\1_{D(0,R_{12})},\1_{D(0,R_{12})}\right) - D\left(\1_{D(0,R_{1})},\1_{D(0,R_{1})}\right) - D\left(\1_{D(0,R_{2})},\1_{D(0,R_{2})}\right)\right) \nonumber\\
&+2 \frac{N^2}{\pi^2} D\left(\1_{D(0,R_{12})\setminus D(0,R_{1})},\1_{H_1}\right) + 2 \frac{N^2}{\pi^2} D\left(\1_{D(0,R_{12})\setminus D(0,R_{2})},\1_{H_2}\right) \nonumber\\
&- 2 \frac{N^2}{\pi^2}D\left(\1_{H_1},\1_{H_2}\right)
\end{align}
Returning to~\eqref{eq:Newton} we have 
$$ 
D\left(\1_{D(0,R)},\1_{D(0,R)}\right) = \frac{\pi ^2 R^4}{4} - \pi^2 R^4 \log R.
$$
On the other hand, using Newton's theorem~\cite[Theorem~9.7]{LieLos-01} again implies that the Coulomb potential generated by $\1_{D(0,R_{12})\setminus D(0,R_{1})}$ is constant inside $D(0,R_{1})$, wherein $H_1$ is included. Hence, using~\eqref{eq:screening}, 
\begin{align*}
D\left(\1_{D(0,R_{12})\setminus D(0,R_{1})},\1_{H_1}\right) &= - \frac{\pi}{N} \sum_{j=1} ^{M_1} \log |\, . \,| \star \1_{D(0,R_{12})\setminus D(0,R_{1})} (\bw_{1,j}) \\
&= - \pi \frac{M_1}{N}  \log |\, . \,| \star \1_{D(0,R_{12})\setminus D(0,R_{1})} (0)\\
&= \pi ^2 \frac{M_1}{N} \left(R_{1}^2 \log R_{1} - \frac{R_{1}^2}{2} - R_{12}^2 \log R_{12} + \frac{R_{12}^2}{2}\right)
\end{align*}
and a similar expression with $R_1,H_1$ replaced by $R_2,H_2$. Since $H_1\cap H_2 = \varnothing$ it also follows from~\eqref{eq:screening} that 
$$ 
D\left(\1_{H_1},\1_{H_2}\right) = - \frac{\pi^2}{N^2} \sum_{j=1}^{M_1} \sum_{k=1}^{M_2} \log |\bw_{1,j} - \bw_{2,k}|.
$$
Combining the above calculations and inserting them in~\eqref{eq:MFEdiff} leads to~\eqref{eq:split MFE}.
\end{proof}

\section{Proofs in the one hole case}\label{sec:proofone}

Our general strategy for proving Theorem~\ref{thm:onehole} is as follows:

\medskip

\noindent $\bullet$ Since we are dealing with $M$ \emph{distinct} charges distributed around $\ba$, we can apply a simple exact formula for the corresponding partition function, originating in~\cite{AkeVer-03,Lambert-20} and used extensively in~\cite{LamLunRou-22}. This is based on the fact that our Gibbs state is a conditioned Ginibre ensemble.

\medskip 

\noindent $\bullet$ The formula gives~\eqref{eq:resultonehole} up to the logarithm of a determinant based on the finite $N$ Ginibre correlation kernel. Replacing the latter with the infinite area, translation invariant, correlation kernel, and controlling the error thus made,~\eqref{eq:resultonehole} follows suit. 

\medskip

\subsection{The exact formula}

%
%
%


Let then 
\begin{equation}\label{eq:ref points}
\left(\bw_1,\ldots,\bw_{M}\right)\in \R^{2M}
\end{equation}
be a reference cloud of \emph{distinct} points. We assume~\eqref{eq:ref space} and~\eqref{eq:ref energy}.

We identify the vectors $\bw_1,\ldots,\bw_M$ with complex numbers $w_1,\ldots,w_M$ and $\ba$ with the complex number $a$. Define 
\begin{equation}\label{eq:partdiltran}
\ZN (\ba) := \int_{\R^{2N}}\prod_{1\leq j < k \leq N} |z_j - z_k| ^2 e^{-N\sum_{j=1}^N |z_j|^2} \prod_{j=1}^N \prod_{k= 1}^M \left|z_j- (w_k + a)\right|^{2} dz_1\ldots dz_N. 
\end{equation}
We shall use the a priori information that~\eqref{eq:intro ZabWieg} is already known rigorously up to $o_N (N)$:
\begin{equation}\label{eq:a priori}
F_N (\ba) = -\frac{1}{2} \log \ZN (\ba) = \Eel (N,N,M) - \frac{N}{4} \log N + \frac{N}{2} \beta f_2 (\beta) + o_N (N)
\end{equation}
where $\Eel (N,N,M)$ is the mean-field energy from section~\ref{sec:MF} at $J= N$ and $\beta f_2 (\beta)$ is the infinite area Jellium free-energy density, at inverse temperature $\beta = 2$, as defined in~\cite[Section~9 and references therein]{Serfaty-24}. We use the above at $\beta = 2$ where estimates for the Ginibre ensemble~\cite{ForByu-25} imply 
$$ 2 f_2 (2) = 2 \left( \frac{\log 2\pi}{2} - 1 \right).$$
The validity of~\eqref{eq:a priori} is usually investigated for a smooth, fixed external potential, not that generated by point charges that we consider. However, since the singularities generated by the point charges are outside of the droplet, a careful inspection of the known proofs shows that they carry over to our case.  In fact, our arguments below only require the direction $\leq$ of~\eqref{eq:a priori} which, as per~\eqref{eq:intro free func}, is the ``easy'' direction of the variational principle. Constructing a good trial state would be sufficient for our needs.



We start our investigation of the remainder term in~\eqref{eq:a priori} by recalling an exact formula:

\begin{lemma}[\textbf{Exact expression for partition functions with pinned unit charges}] \label{lem:exp}\mbox{}\\
With the notation above and with $\ZGin_{N+M} = \cZ_{N+M} (\varnothing)$ the partition function with $N+M$ mobile charges and no pinned charge  (i.e. the Ginibre partition function~\eqref{eq:Ginibre def}), we have
\begin{multline}\label{eq:exact part}
\ZN (\ba) =  \ZGin_{N+M}  \frac{N!}{(N+M)!}
\det_{M\times M} \left[K_{N+M}(w_i + a,w_j +a) \right]  \frac{\prod_{j=1}^M e^{N|w_j +a|^2}}{\prod_{1\leq i <j \leq M} |w_i-w_j|^2} 
\end{multline}
where 
\begin{equation}\label{eq:KN}
K_{J} (z,w) = e ^{-\frac{N}{2}|z|^2 -\frac{N}{2}|w|^2} \sum_{j= 0}^{J-1} \frac{N^{j+1}}{\pi j!} z^j \overline{w}^j
\end{equation}
with the appropriate normalization is the Ginibre correlation kernel for $J$ particles in a background charge density $-4N$. 
\end{lemma}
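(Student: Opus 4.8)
The plan is to recognize the right-hand side of~\eqref{eq:exact part} as an $M$-point correlation function of the Ginibre ensemble with $N+M$ particles, and to obtain the identity by evaluating that correlation function in two different ways. First I would recall (see Appendix~\ref{app:Ginibre}) that the $N+M$-particle ensemble with joint density proportional to $\prod_{1\le i<j\le N+M}|z_i-z_j|^2\,e^{-N\sum_i |z_i|^2}$ is a determinantal point process: the monomials $z^j$, $j\ge 0$, are orthogonal for the weight $e^{-N|z|^2}$, and the elementary computation $\int_{\R^2}|z|^{2j}e^{-N|z|^2}\,d\bx = \pi\, j!/N^{j+1}$ (polar coordinates) shows that its correlation kernel is precisely $\K_{N+M}$ of~\eqref{eq:KN}. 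In particular its $M$-point correlation function is $\det_{M\times M}\bigl[\K_{N+M}(u_i,u_j)\bigr]$.

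Next I would compute the same correlation function directly from its definition as a rescaled marginal of the joint density. Setting the first $M$ integration variables equal to $u_1,\ldots,u_M$, relabelling the remaining $N$ of them as $z_1,\ldots,z_N$, and splitting the Vandermonde factor as
\begin{equation*}
\prod_{1\le i<j\le N+M}|z_i-z_j|^2 = \prod_{1\le i<j\le M}|u_i-u_j|^2\;\prod_{i=1}^M\prod_{k=1}^N |u_i - z_k|^2\;\prod_{1\le k<\ell\le N}|z_k-z_\ell|^2
\end{equation*}
(and splitting $e^{-N\sum|z_i|^2}$ accordingly), one sees that the remaining integral over $z_1,\ldots,z_N$ is \emph{exactly} $\ZN(\ba)$ with $u_i=w_i+a$, since $\prod_{i,k}|u_i-z_k|^2$ reproduces the potential of the pinned charges in~\eqref{eq:partdiltran}. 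Keeping track of the combinatorial factor $(N+M)!/N!$ coming from the definition of the $M$-point correlation function together with the permutation symmetry of the integrand, and using $|u_i-u_j|=|w_i-w_j|$, this yields
\begin{equation*}
\det_{M\times M}\bigl[\K_{N+M}(w_i+a,w_j+a)\bigr] = \frac{(N+M)!}{N!}\,\frac{\ZN(\ba)}{\cZ_{N+M}(0)}\;\prod_{1\le i<j\le M}|w_i-w_j|^2\;\prod_{j=1}^M e^{-N|w_j+a|^2}.
\end{equation*}

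Solving this last relation for $\ZN(\ba)$ gives~\eqref{eq:exact part}. There is no real obstacle here: this is the classical exact formula of~\cite{AkeVer-03,Lambert-20,LamLunRou-22}, and the only points that require care are the bookkeeping of the factorial prefactors, the convention that the weight $e^{-N|z|^2}$ here pairs with $N+M$ particles in a background of charge density $-4N$ (rather than the self-normalized Ginibre weight), and the harmless observation that distinctness of the points $w_i+a$ (guaranteed by~\eqref{eq:ref space}) keeps the Vandermonde denominator $\prod_{i<j}|w_i-w_j|^2$ away from zero, so that the rearrangement above and the final division are legitimate.
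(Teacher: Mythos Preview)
Your proposal is correct and follows essentially the same approach as the paper: both identify $\det_{M\times M}[\K_{N+M}(w_i+a,w_j+a)]$ with the $M$-point correlation function of the $(N+M)$-particle Ginibre ensemble, then compute that correlation function as the rescaled marginal~\eqref{eq:Ginibre marginal} and recognize the remaining $N$-fold integral as $\ZN(\ba)$. The paper's own proof is in fact terser than yours---it simply records~\eqref{eq:Ginibre marginal} and cites~\cite{AkeVer-03,Lambert-20,LamLunRou-22} for the determinantal structure---so your sketch of the orthogonal-polynomial computation and the Vandermonde splitting is a welcome expansion of the same argument.
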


\begin{proof}
This originates in~\cite{AkeVer-03,Lambert-20}, see for example~\cite[Appendix~A]{LamLunRou-22} for a proof of~\eqref{eq:exact part}. We used that for $M$ distinct points $w_1,\dots,w_M$
\begin{align}\label{eq:Ginibre marginal}
\frac{1}{M!}&\det_{M\times M} \left[K_{N+M}(w_i,w_j) \right] = \rho ^{(M)}_{N+M} (w_1,\ldots,w_M) := \frac{{N+M \choose M}}{\ZGin_{N+M}}\nonumber\\
&\int_{\R^{2N}}\prod_{1\leq j < k \leq N+M} |w_j - w_k| ^2 e^{-N\sum_{j=1}^{N+M} |w_j|^2} dw_{M+1} \ldots dw_{M+N}
\end{align}
the $M$-particles reduced density of a Ginibre ensemble with $N+M$ particles and correlation kernel $K_{N+M}$ as in~\eqref{eq:KN} (see Appendix~\ref{app:Ginibre}). 

\end{proof}

We will need some accurate estimates on the determinant appearing in~\eqref{eq:exact part}. This is to ensure that the errors we will later make by replacing it with the $N+M\to \infty$ version will indeed be negligible compared with its main contribution.

\begin{lemma}[\textbf{Lower bound on the determinant}]\label{lem:low det}\mbox{}\\
Under the previously stated assumptions, for a fixed positive constant $C>0$
\begin{equation}\label{eq:bound det 1}
 \det_{M\times M} \frac{\pi}{N} \left[K_{N+M}(w_i + a,w_j +a) \right] \geq \exp (- C \left( c - c\log c \right) N)   
\end{equation}
where $c = M/N.$
\end{lemma}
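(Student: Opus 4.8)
The plan is to start from the exact formula~\eqref{eq:exact part} and turn the statement into an asymptotic bookkeeping problem. Solving~\eqref{eq:exact part} for the determinant, and writing $\cH_N(\bw+\ba)=\frac N2\sum_k|\bw_k+\ba|^2-\sum_{i<j}\log|\bw_i-\bw_j|$ (the pairwise logarithmic term is translation invariant), one gets
\[
\det_{M\times M}\left[\tfrac{\pi}{N}\K_{N+M}(w_i+a,w_j+a)\right]=\left(\tfrac{\pi}{N}\right)^M\frac{(N+M)!}{N!\,\cZ_{N+M}(0)}\,e^{-2\cH_N(\bw+\ba)}\,\ZN(\ba),
\]
so it suffices to bound from below $M\log\frac{\pi}{N}+\log\frac{(N+M)!}{N!}-\log\cZ_{N+M}(0)-2\cH_N(\bw+\ba)+\log\ZN(\ba)$.

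Each piece can be handled with results already available. I would bound $\log\ZN(\ba)=-2F_N(\ba)$ from below via the ``$\leq$'' direction of~\eqref{eq:a priori} --- the easy half of the variational principle, which needs only a good trial state --- and substitute for $\Eel(N,N,M)$ the explicit value of Theorem~\ref{thm:MFE}(i) at $J=N$, namely $\Eel(N,N,M)=\frac12 C_R-\frac{N^2}{2} D(\mueq,\mueq)$ with $R=\sqrt{1+c}$ and $\mueq=\frac1\pi\1_{D(0,R)\setminus H}$, $H$ the screening region of $(\bw_k+\ba)_k$. I would replace $\cZ_{N+M}(0)$ and $\frac{(N+M)!}{N!}$ by their exact values (a product of factorials, Appendix~\ref{app:Ginibre}) and expand their logarithms via Stirling and the Barnes $G$-function asymptotics, and I would insert the expansion~\eqref{eq:ref energy} of $\cH_N(\bw+\ba)$, legitimate for the translated clusters by~\eqref{eq:runasum}.

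The heart of the argument is then to collect the terms. The contributions of order $N^2\log N$, $N^2$ and $N\log N$ coming from the factorial product, from $\Eel(N,N,M)$, from the Coulomb energies in~\eqref{eq:ref energy}, and from the $N\log N$ term of~\eqref{eq:a priori} cancel one another. This is the cancellation already behind Theorem~\ref{thm:MFE}: one uses the screening (subharmonic quadrature) relation~\eqref{eq:screening} together with Newton's theorem~\eqref{eq:Newton} to rewrite $D(\1_{D(0,R)\setminus H},\1_{D(0,R)\setminus H})$ in terms of $D(\1_H,\1_H)$, $\int_H|\bx|^2\,d\bx$ and Newtonian potentials of disks; the Coulomb-energy contributions then match, with opposite signs, those carried by $2\cH_N$, and the remaining polynomial-in-$R$ (hence in $c$) and $\log(1+c)$ terms cancel against those produced by the factorials and by $C_R$. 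What survives is of order $N$ --- together with the $o_N(N)$ allowed by~\eqref{eq:a priori} and $O(\log N)$ ``topological'' leftovers --- and it remains to check that the surviving linear-in-$N$ terms are governed by $M$ and by $M\log(N/M)$, i.e.\ by $N(c-c\log c)$, which is exactly~\eqref{eq:bound det 1}.

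The main obstacle I anticipate is precisely this bookkeeping: one has to juggle the $N^2\log N$, $N^2$ and $N\log N$ scales coming from four different sources --- the Ginibre partition function, the mean-field energy, the assumed energy of the pinned cloud, and the a priori free-energy expansion --- and verify that they cancel, which forces one to reproduce the screening/Newton computations of Theorem~\ref{thm:MFE} verbatim and then to keep enough precision in the $O(N)$ remainder to recognize its $c-c\log c$ structure.
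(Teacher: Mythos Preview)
Your proposal is correct and follows essentially the same route as the paper: solve~\eqref{eq:exact part} for the determinant, feed in the easy direction of~\eqref{eq:a priori}, the Ginibre asymptotics of Appendix~\ref{app:Ginibre}, Stirling, and Assumption~\ref{asum:onehole}(ii), then use~\eqref{eq:screening} and~\eqref{eq:Newton} to cancel the $N^2$ terms and read off the surviving $cN\log c + O(cN)$. The paper organizes the cancellation by first packaging $-\log\cZ_{N+M}(0)+A(M,N)$ as $2\Eel(N+M,N,0)$ plus lower order and then showing $2\Eel(N+M,N,0)-2\Eel(N,N,M)-\frac{N^2}{\pi}\int_H|\bx|^2+\frac{N^2}{\pi^2}D(\1_H,\1_H)=0$, which is exactly the screening/Newton identity you anticipate.
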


\begin{proof}
Starting from~\eqref{eq:exact part} and recalling the notation~\eqref{eq:ref energy} we find 
\begin{multline}\label{eq:pouet}
-\log \ZN (\ba) = -\log \ZGin_{N+M} - 2 \cH_N (\bw_1,\ldots,\bw_M) + \log \frac{(N+M)!}{N!}  - M \log \frac{N}{\pi} \\- \log \det_{M\times M} \frac{\pi}{N} \left[K_{N+M}(w_i + a,w_j +a) \right]. 
\end{multline}
From Stirling's formula we get 
\begin{align}\label{eq:AMN}
A(M,N) &= \log \frac{(M+N)!}{N!} - M \log \frac{N}{\pi} \nonumber
\\&= \frac{1}{2}\log \frac{N+M}{N}+(N+M) \log\frac{N+M}{N} + M(\log \pi - 1) +o_N (1)\nonumber 
\\&=(1+c) N \log (1+c) - cN \left( 1-\log\pi\right) + \frac{1}{2} \log (1+c) + o_N (1)
\end{align}
whereas asymptotics for the Ginibre ensemble recalled in~\eqref{eq:Ginibre part 3} lead to ($\beta = 2$)
\begin{multline}\label{eq:useGin} 
-\log \ZGin_{N+M} + 2 A (M,N) = \frac{3}{4} (1+c)^2 N^2  - \frac{(1+c)^2}{2} N^2 \log (1+c)  \\ - (1+c)\frac{N}{2} \log  N + (N+M)\beta f_2 (\beta) + c O(N)\\
= 2\Eel(N+M,N,0) - \frac{1+c}{2} N \log N + (N+M)\beta f_2 (\beta) + c O(N)
\end{multline}
where $\Eel(N+M,N,0)$ is the mean-field energy from Section~\ref{sec:MF} with $J=N+M,M= 0$ and $O(N)$ is bounded linearly in $N$.

Combining~\eqref{eq:a priori} with~\eqref{eq:pouet} and~\eqref{eq:useGin} and then inserting~\eqref{eq:ref energy} we find 
\begin{multline}\label{eq:quasi det bound} 
 \log \det_{M\times M} \frac{\pi}{N} \left[K_{N+M}(w_i + a,w_j +a) \right] \\ = 2\Eel(N+M,N,0) - 2\Eel(N,N,M) - \frac{N^2}{\pi} \int_{H} |\bx|^2 d\bx + \frac{N ^2}{\pi ^2} D \left(\1_{H},\1_{H}\right) \\ 
 + cN  \beta f_2 (\beta)  + cN \log c  -cN \left( 1- \log \pi\right) + c O (N)
\end{multline}
where $H$ is the screening region of the pinned charges. There now remains to observe that the terms on the second line cancel to conclude the proof. 

Indeed, with 
$$ M = cN \mbox{ and } R= \sqrt{1+c}$$
it follows from~\eqref{eq:MFsol} and~\eqref{eq:MFsol2} that 
\begin{align*} 
2\Eel(N+M,N,0) &= N(N+M)(1+c) - N (N+M)(1+c) \log (1+c) \\
&- \frac{N^2}{\pi^2} D\left(\1_{D(0,R)},\1_{D(0,R)}\right)\\
2\Eel(N,N,M)&= N^2 (1+c) - N^2 (1+c) \log (1+c) \\&- \frac{N^2}{\pi^2} D\left(\1_{D(0,R)},\1_{D(0,R)}\right) 
- \frac{N^2}{\pi^2} D\left(\1_{H},\1_{H}\right) + 2 \frac{N^2}{\pi^2} D\left(\1_{D(0,R)},\1_{H}\right)   
\end{align*}
and hence 
\begin{multline*}
 2\Eel(N+M,N,0) - 2\Eel(N,N,M) - \frac{N^2}{\pi} \int_{H} |\bx|^2 d\bx + \frac{N ^2}{\pi ^2} D \left(\1_{H},\1_{H}\right) \\
 = NM (1+c) - NM (1+c) \log (1+c) - 2 \frac{N^2}{\pi^2} D\left(\1_{D(0,R)},\1_{H}\right)  - \frac{N^2}{\pi} \int_{H} |\bx|^2 d\bx = 0
\end{multline*}
where we used~\eqref{eq:Newton} and~\eqref{eq:screening area} to compute $D\left(\1_{D(0,R)},\1_{H}\right)$ in the last step. 

Inserting in~\eqref{eq:quasi det bound} and exponentiating the resulting expression concludes the proof.
\end{proof}

\subsection{Moving the pinned charges}

We now use the exact formula from Lemma~\ref{lem:exp} to investigate the effect of a joint translation of the pinned charges. To this effect we first replace the correlation kernel $K_{N+M}$ by the corresponding, infinite area, kernel $K_\infty$. The error thus made is controlled thanks to Lemma~\ref{lem:low det}.

\begin{lemma}[\textbf{Inserting the translation-invariant kernel}]\label{lem:replace ker}\mbox{}\\
Let (with the usual identification $\R^2 \leftrightarrow \C$)
\begin{align}\label{eq:Kinfty}
K_{\infty} (z,w) &= e ^{-\frac{N}{2}|z|^2 -\frac{N}{2}|w|^2} \sum_{j= 0}^{\infty} \frac{N^{j+1}}{\pi j!} z^j \overline{w}^j\nonumber \\
&=\frac{N}{\pi} e^{-\frac{N}{2}\left(|z|^2 + |w|^2 - 2 z \overline{w} \right)}\nonumber \\
&= \frac{N}{\pi} e^{-\frac{N}{2}\left(|z-w|^2 - \i (\bz-\bw) \cdot (\bz + \bw)^{\perp} \right)}
\end{align}
and 
\begin{multline}\label{eq:Zinfty}
\ZN ^\infty (\ba) :=  \ZGin_{N+M} \frac{N!}{(N+M)!} \det_{M\times M} \left[K_{\infty}(w_i + a,w_j +a) \right]  \frac{\prod_{j=1}^M e^{N|w_j +a|^2}}{\prod_{1\leq i <j \leq M} |w_i-w_j|^2} 
\end{multline}
we have that, for $|\ba|,c$ small enough, 
\begin{equation}\label{eq:Zinfty2}
-\log  \ZN (\ba) = -\log  \ZN ^\infty (\ba) + o_N(1),
\end{equation}
with $o_N (1)$ exponentially small in the limit $N\to \infty$. 
\end{lemma}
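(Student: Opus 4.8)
The plan is to compare the two determinants $\det_{M\times M}[\K_{N+M}(w_i+a,w_j+a)]$ and $\det_{M\times M}[\K_\infty(w_i+a,w_j+a)]$ directly, since these are the only factors in which $\ZN$ and $\ZN^\infty$ differ. Writing $\delta_i := w_i+a$, the difference of kernels is
\[
\K_{N+M}(\delta_i,\delta_j) - \K_\infty(\delta_i,\delta_j) = -\,e^{-\frac N2(|\delta_i|^2+|\delta_j|^2)}\sum_{\ell > N+M}\frac{N^{\ell+1}}{\pi\,\ell!}\,\delta_i^\ell\,\overline{\delta_j}^{\,\ell}.
\]
The crucial input is that the pinned charges live in the screening region $H$, hence — up to the small translation $\ba$ — inside the disk $D(0,R)$ with $R=\sqrt{1+c}$, so $|\delta_i|^2 \le 1+c+o(1)$. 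One checks that on this range the tail $\sum_{\ell>N+M}\frac{(N|\delta_i||\delta_j|)^\ell}{\ell!}$ compared with the full sum $e^{N|\delta_i||\delta_j|}$ is exponentially small: a Stirling/large-deviations estimate for the Poisson tail gives a bound of the form $e^{-cNI(c)}$ with $I(c)>0$ a fixed rate, since the "typical" index $N|\delta_i\overline{\delta_j}| \le N(1+c)$ sits strictly below the cutoff $N+M = N(1+c)$ only marginally — here one must be a little careful, but the quadratic term $|z|^2$ in the Gaussian weight versus the linear growth of $N|z|^2$ in the exponent of the kernel sum means the relevant comparison is $|\delta_i|^2 \le R^2$ versus the fact that the sum up to $N+M$ already captures the Gaussian's mass on $D(0,R)$ with exponentially small error off it. I would phrase this as: $|\K_{N+M}(\delta_i,\delta_j)-\K_\infty(\delta_i,\delta_j)| \le \frac N\pi\, e^{-\kappa N}$ uniformly in $i,j$, for some $\kappa>0$ depending only on $c$ (taken small) and $|\ba|$ (taken small), with also $|\K_\infty(\delta_i,\delta_j)|\le \frac N\pi$.

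Next I would propagate this entrywise kernel estimate to the $M\times M$ determinant. Both matrices have operator norm (or just max row sum) of size $O(N)$ times a factor that is at worst polynomial in $M$ after accounting for the Gaussian decay between distinct well-separated points; more precisely, using the determinantal (positive-definite) structure of $\K_{N+M}$, its eigenvalues lie in $[0,\tfrac N\pi]$, so $\|\tfrac\pi N\K_{N+M}\|\le 1$ and $\det_{M\times M}\tfrac\pi N[\K_{N+M}(\delta_i,\delta_j)]\le 1$. Writing $A = \tfrac\pi N[\K_{N+M}(\delta_i,\delta_j)]$ and $B = \tfrac\pi N[\K_\infty(\delta_i,\delta_j)]$ with $\|A-B\|\le \|A-B\|_{\mathrm{HS}} \le M\cdot e^{-\kappa N}$, the standard perturbation bound $|\det A - \det B| \le \|A-B\|\cdot M \cdot \max(\|A\|,\|B\|)^{M-1}$ — or better, $|\log\det A - \log\det B| \le \|A^{-1}\|\,\|A-B\|\,M$ — gives a difference that is still exponentially small \emph{provided} $\det A$ is not itself super-exponentially small. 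This is exactly where Lemma~\ref{lem:low det} enters: it guarantees $\det A \ge \exp(-C(c-c\log c)N)$, i.e. only exponentially small with a rate that can be made as small as we like by taking $c$ small. Choosing $c$ small enough that $\kappa$ (the decay rate of $\|A-B\|$) strictly exceeds $C(c-c\log c)$ (the rate of the lower bound on $\det A$), the ratio $\det B/\det A = 1 + o_N(1)$ — in fact $1+e^{-\kappa' N}$ for some $\kappa'>0$ — and hence $\log\det_{M\times M}[\K_{N+M}(\delta_i,\delta_j)] - \log\det_{M\times M}[\K_\infty(\delta_i,\delta_j)] = o_N(1)$. Taking $-\tfrac12$ times the difference of logs of~\eqref{eq:exact part} and~\eqref{eq:Zinfty} (all other factors being identical) yields~\eqref{eq:Zinfty2}.

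The main obstacle I anticipate is getting the kernel tail estimate $|\K_{N+M}-\K_\infty|\le \frac N\pi e^{-\kappa N}$ with an \emph{explicit, $c$-quantitative} rate $\kappa$, because the comparison point $N|\delta_i\overline{\delta_j}|$ can be as large as $N(1+c)$, which equals the truncation index $N+M$; naively the Poisson tail at its own mean is only $O(N^{-1/2})$, not exponentially small. The resolution is that the Gaussian prefactor $e^{-\frac N2(|\delta_i|^2+|\delta_j|^2)}$ must be kept together with the sum: the combination $e^{-\frac N2(|z|^2+|w|^2)}\sum_{\ell>N+M}\frac{N^{\ell+1}}{\pi\ell!}z^\ell\overline w^\ell$, bounded via $|z|^\ell|w|^\ell \le (\tfrac{|z|^2+|w|^2}{2})^\ell$ is controlled by $\frac N\pi e^{-\frac N2(|z|^2+|w|^2)}\cdot\big(e^{N\frac{|z|^2+|w|^2}2} - \sum_{\ell\le N+M}\tfrac{(N(|z|^2+|w|^2)/2)^\ell}{\ell!}\big)$, i.e. by the upper tail of a Poisson($N s$) variable with $s=\tfrac{|z|^2+|w|^2}2 \le 1+c+o(1)$ past the level $N+M = N(1+c)$. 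Since $s < 1+c$ with a margin unless $z,w$ both sit essentially on $\partial D(0,R)$ — and the screening region $H$ is assumed strictly inside $D(0,R)$ by Assumption~\ref{asum:multhole} (the holes are far from the outer boundary), giving $|\delta_i|^2 \le R^2 - \eta$ for a fixed $\eta>0$ when $|\ba|$ is small — the Chernoff bound for the Poisson upper tail gives $e^{-\kappa N}$ with $\kappa = \kappa(\eta,c)>0$. I would isolate this as a short sublemma. Once that rate is in hand, and confirmed to beat the $O((c-c\log c)N)$ loss from Lemma~\ref{lem:low det} for $c$ small, the determinant comparison and the conclusion are routine.
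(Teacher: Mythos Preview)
Your approach is essentially the paper's: bound $|K_{N+M}-K_\infty|$ by $Ce^{-\kappa N}$ entrywise (the paper cites this from its appendix, you sketch the Poisson-tail argument, correctly identifying that the margin comes from Assumption~\ref{asum:multhole} keeping the points strictly inside $D(0,\sqrt{1+c})$), propagate to the determinant, and use Lemma~\ref{lem:low det} with $c$ small so the exponential gain beats the lower bound's rate. One correction: the eigenvalues of the matrix $\tfrac{\pi}{N}[K_{N+M}(\delta_i,\delta_j)]$ are \emph{not} bounded by $1$ merely from positive-definiteness (two nearly coincident points would give top eigenvalue $\approx 2$); what you actually need, and also mention in passing, is the Schur/row-sum bound $\max_i\sum_j e^{-N|\delta_i-\delta_j|^2/2}\le C$ from the separation Assumption~\ref{asum:onehole}(i) --- this is precisely how the paper bounds the non-difference columns in its telescoping-plus-Hadamard argument, and with that fix your bound $|\det A-\det B|\le M\,\|A-B\|\,C^{M-1}$ is the paper's argument rewritten in operator-norm language.
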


\begin{proof}
Comparing~\eqref{eq:exact part} with~\eqref{eq:Zinfty}, writing 
\begin{multline*}
 \log \det_{M\times M}\left[\frac{\pi}{N}K_{\infty}(w_i + a,w_j +a) \right] - \log \det_{M\times M}\left[\frac{\pi}{N}K_{N+M}(w_i + a,w_j +a) \right] \\ 
 = \log \left(1 + \frac{\det_{M\times M}\left[\frac{\pi}{N}K_{\infty}(w_i + a,w_j +a) \right]-\det_{M\times M}\left[\frac{\pi}{N}K_{N+M}(w_i + a,w_j +a) \right]}{\det_{M\times M}\left[\frac{\pi}{N}K_{N+M}(w_i + a,w_j +a) \right]}\right) 
\end{multline*}
we need to prove that  
\begin{multline*}
 \left|\det_{M\times M}\left[\frac{\pi}{N}K_{N+M}(w_i + a,w_j +a) \right]-\det_{M\times M}\left[\frac{\pi}{N}K_{\infty}(w_i + a,w_j +a) \right]\right|\\ 
 \ll \det_{M\times M}\left[\frac{\pi}{N}K_{N+M}(w_i + a,w_j +a) \right]
 \end{multline*}
in the limit $N\to \infty$. In view of Lemma~\ref{lem:low det} it suffices to prove that
\begin{equation}\label{eq:bound det 2}
 \left|\det_{M\times M}\left[\frac{\pi}{N}K_{N+M}(w_i + a,w_j +a) \right]-\det_{M\times M}\left[\frac{\pi}{N}K_{\infty}(w_i + a,w_j +a) \right]\right| \leq e^{-D N} 
\end{equation}
for some fixed $D>0$, and then use the fact that $c=M/N$ is assumed small enough. 

Our bound~\eqref{eq:bound det 2} is obtained with an argument inspired by~\cite[Proof of Lemma~3.4.2]{AndGuiZei-10}. Let $H_k$ be the square $M\times M$ matrix 
\begin{itemize}
 \item whose $k-1$ first columns are the vectors 
 $$v_{kj} := \left(\frac{\pi}{N} K_{\infty}(w_i + a,w_j +a)\right)_{i=1,\ldots,M}$$
 for $j=1 \ldots k-1$
\item whose $k$-th column is the vector 
$$v_{kk} := \left(\frac{\pi}{N} K_{N+M}(w_i + a,w_k +a) - \frac{\pi}{N}K_{\infty}(w_i + a,w_k +a) \right)_{i=1,\ldots,M}$$
\item whose $M-k$ last columns are the vectors
$$v_{kj} := \left(\frac{\pi}{N} K_{N+M}(w_i + a,w_j +a)\right)_{i=1,\ldots,M}$$
 for $j=k+1 \ldots M$.
\end{itemize}
By linearity of the determinant with respect to columns we have 
\begin{equation}\label{eq:lin det}
\det_{M\times M}\left[\frac{\pi}{N}K_{N+M}(w_i + a,w_j +a) \right]-\det_{M\times M}\left[\frac{\pi}{N} K_{\infty}(w_i + a,w_j +a) \right] = \sum_{k=1} ^M \det_{M\times M} H_k 
\end{equation}
and by Hadamard's inequality 
\begin{equation}\label{eq:Hadamard}
 \left| \det_{M\times M} H_k\right| \leq \prod_{j=1} ^M \left(\sum_{i=1} ^M |v_{kj}^i|^2 \right)^{1/2} 
\end{equation}
with $v_{kj}^i$ the $i$-th element of the vector $v_{kj}$. We will bound the above terms using the estimates on correlation kernels recalled in Appendix~\ref{app:Ginibre}. To this end, note that~\eqref{eq:asumscales} with $r_2$ small enough and a choice of $|\ba|$ small enough imply that 
$$ |w_j+a| \leq \sqrt{1 +c} - \delta$$
for some $\delta >0$, so that we may in particular use~\eqref{eq:Kdiff} to obtain 
\begin{equation}\label{eq:Kdiff main}
 \left| K_{N+M}(w_i + a,w_j +a)-K_{\infty}(w_i + a,w_j +a)\right| \leq C e^{-C\delta N} 
\end{equation}
for all $i,j$.

Hence, using~\eqref{eq:Kmodule} and~\eqref{eq:Kdiff} we have, for $j\neq k$ 
\begin{equation}\label{eq:use Hadamard}
 \sum_{i=1} ^M |v_{kj}^i|^2  \leq C \sum_{i=1} ^M \left( e^{-N|w_i - w_j| ^2} + C e^{-CN}\right). 
\end{equation}
But, in view of our choice of configuration $\bw_1,\ldots,\bw_M$, in particular~\eqref{eq:ref space}, the points can be sorted into clusters whose distance to a given $w_j$ is between $LN^{-1/2}$ and $(L+1)N^{-1/2}$, for integers $L$. The number of points in the $L$-th cluster cannot exceed $CL$ for some fixed constant $C$, and drops to $0$ for $L\geq C N^{1/2}$. Hence for $j\neq k$ 
\begin{equation}\label{eq:use Hadamard 2}
 \sum_{i=1} ^M |v_{kj}^i|^2  \leq C \sum_{L=0} ^{C\sqrt{N}} CL \left( e^{-C L ^2} + C e^{-CN}\right) \leq C. 
\end{equation}
On the other hand~\eqref{eq:Kdiff main} gives, for $j=k$ 
\begin{equation}\label{eq:main gain}
\sum_{i=1} ^M |v_{kk}^i|^2 \leq M e^{-C \delta N}
\end{equation}
Hence, combining~\eqref{eq:lin det} and~\eqref{eq:Hadamard} with~\eqref{eq:use Hadamard 2} and~\eqref{eq:main gain} we obtain a bound for the left-hand side of~\eqref{eq:bound det 2} of the order $M^{3/2} C^M e^{-C\delta N}$. Recalling that $M= cN$ and that $\delta$ can be bounded below by a fixed positive constant for $c,|\ba|$ small enough yields the desired~\eqref{eq:bound det 2}.
\end{proof}

We now use translation-invariance of the Ginibre process (whose correlation kernel is~$K_\infty$) to compute the gradient of the modified partition function~\eqref{eq:Zinfty}: 

\begin{lemma}[\textbf{Translation of the pinned charges}]\label{lem:trans}\mbox{}\\
With $\ZN ^\infty (\ba)$ as in~\eqref{eq:Zinfty} we have that 
\begin{equation}\label{eq:moveonehole2}
\nabla_\ba \log \ZN ^\infty (\ba)= 2 c \ba N^2 + 2 N \sum_{j=1} ^M \bw_j  
\end{equation}
\end{lemma}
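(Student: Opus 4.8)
The plan is to differentiate the explicit formula \eqref{eq:Zinfty} for $\ZN^\infty(\ba)$ directly in $\ba$, term by term. Writing $w_j + a$ for the translated charges, the $\ba$-dependence sits in three places: the determinant $\det_{M\times M}[\K_\infty(w_i+a,w_j+a)]$, the exponential prefactor $\prod_{j=1}^M e^{N|w_j+a|^2}$, and the Vandermonde-type denominator $\prod_{i<j}|w_i-w_j|^2$ — but this last factor is translation-invariant, so it does not contribute. The prefactor contributes $\nabla_\ba \sum_j N|w_j+a|^2 = 2N\sum_j (\bw_j + \ba) = 2N\sum_j \bw_j + 2cN^2 \ba$ (recalling $M = cN$), which is exactly the right-hand side of \eqref{eq:moveonehole2}. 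So the crux is to show that $\nabla_\ba \log \det_{M\times M}[\K_\infty(w_i+a,w_j+a)] = 0$.

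\textbf{The key step.} The vanishing of the determinant's $\ba$-derivative is where translation-invariance of the infinite-area Ginibre kernel enters. From \eqref{eq:Kinfty}, $\K_\infty(z,w) = \frac{N}{\pi} e^{-\frac{N}{2}(|z|^2+|w|^2-2z\overline w)}$, so
\begin{equation*}
\K_\infty(w_i+a, w_j+a) = \frac{N}{\pi} e^{-\frac{N}{2}\left(|w_i+a|^2 + |w_j+a|^2 - 2(w_i+a)\overline{(w_j+a)}\right)}.
\end{equation*}
Expanding the exponent, the cross term $-N(w_i+a)\overline{(w_j+a)} = -N w_i \overline{w_j} - N a \overline{w_j} - N w_i \overline a - N|a|^2$, and combining with $-\frac{N}{2}(|w_i+a|^2+|w_j+a|^2)$, one checks that
\begin{equation*}
\K_\infty(w_i+a,w_j+a) = e^{\phi_i(a)}\,\K_\infty(w_i,w_j)\,e^{\overline{\phi_j(a)}}
\end{equation*}
for a suitable $i$-dependent (but $j$-independent) factor $e^{\phi_i(a)}$ — concretely, $\phi_i(a) = N\overline a w_i - \tfrac{N}{2}|a|^2$ plus a purely imaginary piece coming from the $\i(\bz-\bw)\cdot(\bz+\bw)^\perp$ term in \eqref{eq:Kinfty}; the point is that conjugating the matrix $[\K_\infty(w_i,w_j)]$ by the diagonal matrix $\mathrm{diag}(e^{\phi_i(a)})$ on the left and $\mathrm{diag}(e^{\overline{\phi_i(a)}})$ on the right multiplies the determinant by $\prod_i e^{\phi_i(a)+\overline{\phi_i(a)}} = \prod_i e^{2\Re \phi_i(a)} = \prod_i e^{N\Re(\overline a w_i + a \overline{w_i}) - N|a|^2}$. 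Wait — this is \emph{not} $\ba$-independent; rather its logarithm is $\sum_i \big(2N\,\bw_i\cdot\ba - N|\ba|^2\big)$, whose $\ba$-gradient is $2N\sum_i \bw_i - 2MN\ba = 2N\sum_i\bw_i - 2cN^2\ba$. Adding this to the prefactor contribution $2N\sum_j\bw_j + 2cN^2\ba$ would \emph{double} the first term and cancel the second, which is not \eqref{eq:moveonehole2}. So the correct bookkeeping is the reverse: the determinant of $[\K_\infty(w_i+a,w_j+a)]$ equals $\prod_i |e^{\phi_i(a)}|^2$ times $\det[\K_\infty(w_i,w_j)]$, and one must combine this with the prefactor \eqref{eq:Zinfty} \emph{before} differentiating, verifying that the $|\ba|^2$ and $\bw_i\cdot\ba$ contributions recombine to exactly give \eqref{eq:moveonehole2}; I will carry out this recombination explicitly rather than asserting the determinant derivative vanishes on its own.

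\textbf{Main obstacle and cleanest route.} The honest statement is therefore: write $\log \ZN^\infty(\ba) = \text{(constant in }\ba) + N\sum_j|w_j+a|^2 + \log\det_{M\times M}[\K_\infty(w_i+a,w_j+a)]$, use the diagonal-conjugation identity above to get $\log\det_{M\times M}[\K_\infty(w_i+a,w_j+a)] = \log\det_{M\times M}[\K_\infty(w_i,w_j)] + \sum_{i=1}^M 2\Re\phi_i(a)$ with $2\Re\phi_i(a) = N(\overline a w_i + a\overline{w_i}) - N|a|^2 = 2N\,\bw_i\cdot\ba - N|\ba|^2$, and then compute
\begin{equation*}
\nabla_\ba\log\ZN^\infty(\ba) = \nabla_\ba\Big(N\sum_j|\bw_j+\ba|^2 + \sum_i(2N\bw_i\cdot\ba - N|\ba|^2)\Big) = 2N\sum_j(\bw_j+\ba) + 2N\sum_i\bw_i - 2MN\ba.
\end{equation*}
Since $M = cN$, the $\ba$-linear terms combine to $2N^2 c\,\ba + 2N^2 c\,\ba - 2cN^2\ba = 2cN^2\ba$, and the $\bw$-terms give $2N\sum_j\bw_j + 2N\sum_i\bw_i$ — which is twice too big. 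This signals that I have double-counted: in fact the correct reading of \eqref{eq:Kinfty}–\eqref{eq:Zinfty} is that the determinant factor \emph{already contains} the $e^{\frac N2(\cdots)}$ Gaussians that the prefactor $\prod e^{N|w_j+a|^2}$ is designed to cancel, so after the conjugation identity the genuine $\ba$-dependence of $\det[\K_\infty]$ is only through the \emph{holomorphic/anti-holomorphic cross piece}, i.e. only $\sum_i(N\overline a w_i + N a \overline{w_i})$ \emph{without} the $-N|a|^2$, or — depending on how the Gaussian bookkeeping falls out — the matrix $[\K_\infty(w_i+a,w_j+a)\prod_j e^{N|w_j+a|^2}\cdots]$ should be recognized as the conjugate of a \emph{translation-invariant} object, making $\nabla_\ba$ of the whole bracket in \eqref{eq:Zinfty} collapse cleanly to $2cN^2\ba + 2N\sum_j\bw_j$. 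The main obstacle, then, is purely one of careful Gaussian bookkeeping: tracking exactly which $e^{\pm\frac N2|\cdot|^2}$ factors live in $\K_\infty$ versus in the prefactor of \eqref{eq:Zinfty}, and which of those are killed by the $\prod e^{N|w_j+a|^2}$, so that the residual $\ba$-dependence of $\det_{M\times M}[\K_\infty(w_i+a,w_j+a)]\prod_j e^{N|w_j+a|^2}$ is captured by a single diagonal conjugation whose Jacobian contributes exactly the claimed $2cN^2\ba + 2N\sum_j\bw_j$. I would organize the write-up around the identity $\K_\infty(z+a,w+a)e^{\frac N2 a\overline w + \frac N2 \overline a z} \cdot(\text{imaginary phase}) = \K_\infty(z,w)\cdot e^{N a\overline w - \frac N2|a|^2}\cdot(\cdots)$ (fixing constants by direct expansion of the quadratic form in the exponent) and then differentiate.
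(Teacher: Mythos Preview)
Your approach via diagonal conjugation is exactly the one the paper uses, but you have a concrete algebra error that derails everything afterward. Expanding the exponent of $K_\infty(w_i+a,w_j+a)$ from the second line of~\eqref{eq:Kinfty} gives
\[
K_\infty(w_i+a,w_j+a)=K_\infty(w_i,w_j)\,e^{\frac{N}{2}\left(w_i\overline a-\overline{w_i}a\right)}\,e^{\frac{N}{2}\left(\overline{w_j}a-w_j\overline a\right)},
\]
i.e.\ the conjugating factors are $d_i=e^{\phi_i(a)}$ with $\phi_i(a)=\tfrac{N}{2}(w_i\overline a-\overline{w_i}a)=iN\,\Im(w_i\overline a)$, which is \emph{purely imaginary}. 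Hence $|d_i|=1$, $\prod_i|d_i|^2=1$, and $\det_{M\times M}[K_\infty(w_i+a,w_j+a)]$ is genuinely independent of $\ba$. This is precisely the paper's argument (phrased there via the third line of~\eqref{eq:Kinfty}). All of the $\ba$-dependence of $\log\ZN^\infty(\ba)$ then sits in $N\sum_j|\bw_j+\ba|^2$, whose gradient is $2cN^2\ba+2N\sum_j\bw_j$.

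Your claimed $\phi_i(a)=N\overline a w_i-\tfrac{N}{2}|a|^2$ is wrong: the $|a|^2$ contributions from $-\tfrac{N}{2}|w_i+a|^2$, $-\tfrac{N}{2}|w_j+a|^2$ and $+N(w_i+a)\overline{(w_j+a)}$ cancel exactly, and the surviving cross terms combine to something purely imaginary, not to your $2N\bw_i\cdot\ba-N|\ba|^2$. That single miscalculation is what generated the apparent double-counting and the contradictory recombinations in the rest of your write-up; there is no hidden bookkeeping subtlety to resolve, just this sign/term error to fix.
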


\begin{proof}
We use that the $\log$ of~\eqref{eq:Zinfty} is the sum of several terms, only two of which do depend on $\ba$. In particular, the Vandermonde determinant in the denominator gives no contribution. 

We have
$$ \sum_{j=1}^M |\bw_j + \ba| ^2 = M |\ba|^2 + 2 \ba \cdot \sum_{j=1}^M \bw_j +  \sum_{j=1}^M |\bw_j|^2
$$
and hence, recalling~\eqref{eq:fraction},  
$$ \nabla_\ba \log \ZN ^\infty (\ba) = 2c N^2 \ba + 2 N \sum_{j=1} ^M \bw_j- \nabla_\ba \log \det_{M\times M} \left[K_{\infty}(w_i +a,w_j +a) \right]$$ 
and there remains to observe that $\det_{M\times M} \left[K_{\infty}(w_i +a,w_j +a) \right]$ does not depend on $\ba$ either. Indeed, according to~\eqref{eq:Ginibre marginal} and~\eqref{eq:Kinfty}, it is proportional to the $M-$particles density of a translation-invariant point process (the Ginibre point process on the full plane). More precisely, using the third formula in~\eqref{eq:Kinfty}
$$ \nabla_\ba K_{\infty}(w_i +a,w_j +a) = -\i N \ba^\perp \cdot (\bw_i - \bw_j) K_{\infty}(w_i +a,w_j +a) $$ 
and hence, expanding the determinant, 
\begin{multline*}
 \nabla_\ba \det_{M\times M} \left[K_{\infty}(w_i +a,w_j +a)\right]  \\ = -\i N \ba^\perp \cdot \sum_{\sigma \in \S_M} \mathrm{sgn}(\sigma) \sum_{j=1} ^M (\bw_j - \bw_{\sigma(j)})\prod_{i=1}^M K_{\infty}\big(w_i +a,w_{\sigma(i)}+a\big) = 0
 \end{multline*} 
because certainly
$$\sum_{j=1} ^M \bw_j - \sum_{j=1} ^M\bw_{\sigma(j)} = 0$$ 
for any permutation. This concludes the proof.
\end{proof}

We may now conclude the proof of Theorem~\ref{thm:onehole}. The argument is similar to ideas of~\cite{ByuSeoYan-24}.

\begin{proof}[Proof of Theorem~\ref{thm:onehole}]
Starting from Lemma~\ref{lem:replace ker}, we have that, under the stated assumptions and with an exponentially small $o_N (1)$
$$ \FC_N (\ba) = - \frac{1}{2} \log \ZN ^\infty (\ba) - \Eel (\ba) + o_N (1)$$
and 
$$ \FC_N (\ba) - \FC_N (0) = - \frac{1}{2} \log \ZN ^\infty (\ba) - \Eel (\ba) + \frac{1}{2} \log  \ZN ^\infty (0) + \Eel (0) + o_N (1).$$
But, combining~\eqref{eq:grad MF} (with $J=N$) and~\eqref{eq:moveonehole2} we conclude that the map  
$$ \ba \mapsto - \frac{1}{2} \log \ZN ^\infty (\ba) - \Eel (\ba)$$
is constant, and thus complete the proof.
\end{proof}

\section{Proofs for multiple holes}\label{sec:proofmult}

The main technical input in the proof of Theorem~\ref{thm:multhole} is a decoupling lemma for the determinant obtained by applying Lemma~\ref{lem:exp} to $\cZ_N(\ba_1,\ldots,\ba_n)$. We show that the main contribution is the product of the determinants obtained from applying the lemma to $\cZ_N(\ba_j)$ for $j=1,\ldots,\ba_n$. This is certainly intuitive: the multiple-holes-configuration's total determinant is made of diagonal blocks corresponding to each of the one-hole determinants, complemented with off-diagonal blocks whose fast decay can be controlled via the estimates recalled in Appendix~\ref{app:Ginibre}. This is a clustering property for correlation functions of a Ginibre ensemble when their arguments are sufficiently separated in space. 

The rest of the proof follows by inserting the exact formulae for Ginibre partition functions that we recall in Appendix~\ref{app:Ginibre} and comparing with the properties of the mean-field problem discussed in Section~\ref{sec:MF}.

\subsection{Decoupling the large determinant}

We state the decoupling lemma directly for the infinite Ginibre ensemble, replacing finite-$N$ correlation kernels by $K_\infty$.

\begin{lemma}[\textbf{Decoupling the multiple-holes determinant}]\label{lem:decouple}\mbox{}\\
We concatenate the $n$ lists of points $\left( \bw_{j,k}\right)_{k=1 \ldots M_j}$ (with $j=1\ldots n$) into a single list $\bW = (\bw_1,\ldots,\bw_M)$ of cardinal $M$ to define the $M\times M$ matrix 
\begin{equation}\label{eq:big mat}
\cK^M := \left( \frac{\pi}{N} K_{\infty} (w_j,w_k)\right)_{1\leq j, k \leq M}.
\end{equation}
Under Assumptions~\ref{asum:onehole} and~\ref{asum:multhole} we have that, for an exponentially small $o_N(1)$, 
\begin{equation}\label{eq:decouple fin}
\log \det_{M\times M} \cK ^M = \sum_{j=1} ^n \log \det_{M_j\times M_j} \left(\frac{\pi}{N} K_{\infty} (\bw_{j,k},\bw_{j,\ell})_{1\leq k,\ell \leq M_j}  \right) +o_N (1)
\end{equation}
where, by contrast with~\eqref{eq:big mat}, we use the labeling of points into several different groups.
\end{lemma}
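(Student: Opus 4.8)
The plan is to expand the $M\times M$ determinant $\det\cK^M$ block by block, treating the $n$ diagonal blocks $\mathcal{K}^{M_j} := (\tfrac{\pi}{N}K_\infty(\bw_{j,k},\bw_{j,\ell}))_{k,\ell}$ as the main term and showing the off-diagonal blocks contribute only $o_N(1)$ to the \emph{logarithm}. Concretely, I would write $\cK^M = \cK^M_{\mathrm{diag}} + E$, where $\cK^M_{\mathrm{diag}}$ is the block-diagonal matrix with blocks $\mathcal{K}^{M_j}$ and $E$ collects the off-diagonal blocks, whose entries are $\tfrac{\pi}{N}K_\infty(\bw_{j,k},\bw_{j',\ell})$ with $j\neq j'$. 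By Assumption~\ref{asum:multhole}, any two points in different clusters are separated by at least $\mathrm{dist}(D_j,D_{j'}) \geq R_j/r_1 \gg M^{-1/2}$; since the holes $H_j$ have area $\pi c_j$ of order $1$, this distance is bounded below by a fixed positive constant, so by the Gaussian decay bound~\eqref{eq:Kmodule} in Appendix~\ref{app:Ginibre} every entry of $E$ is bounded by $Ce^{-cN}$ for fixed $c>0$.

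Next I would run the same column-by-column telescoping argument used in the proof of Lemma~\ref{lem:replace ker} (following \cite[Proof of Lemma~3.4.2]{AndGuiZei-10}): write the difference $\det\cK^M - \det\cK^M_{\mathrm{diag}}$ as a sum of $M$ determinants, each obtained by replacing one column of $\cK^M_{\mathrm{diag}}$ by the corresponding column of $E$ (restricted to off-diagonal rows) while keeping the already-processed columns from $\cK^M$ and the not-yet-processed columns from $\cK^M_{\mathrm{diag}}$. Apply Hadamard's inequality to each such determinant: the ``special'' column has Euclidean norm at most $M^{1/2}e^{-cN}$ (at most $M$ entries, each $\leq Ce^{-cN}$), while every other column has norm at most a fixed constant $C$ — this uses exactly the clustering estimate~\eqref{eq:use Hadamard 2}, namely that summing $e^{-CN|\bw_i-\bw_\cdot|^2}$ over a cluster of points spaced $\gtrsim M^{-1/2}$ apart is $O(1)$. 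This gives $|\det\cK^M - \det\cK^M_{\mathrm{diag}}| \leq M^{3/2}C^M e^{-cN} \leq e^{-c'N}$ for $c,|\ba_j|$ small enough (so that the separation lower bound $\delta$ in the exponent beats the $C^M = e^{cN\log C}$ factor). Finally, invoke Lemma~\ref{lem:low det} applied to each cluster: $\det\cK^M_{\mathrm{diag}} = \prod_j \det\mathcal{K}^{M_j} \geq \exp(-C\sum_j(c_j - c_j\log c_j)N) \geq \exp(-C'cN)$, which for $c$ small is $\gg e^{-c'N}$. Dividing, $\det\cK^M / \det\cK^M_{\mathrm{diag}} = 1 + o_N(1)$, and taking logarithms yields~\eqref{eq:decouple fin}.

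The main obstacle is the competition in the exponent: the Hadamard bound produces a combinatorial factor $C^M$ (from $M$ columns each of norm $O(1)$) which is $e^{O(cN)}$, and this must be dominated both by the off-diagonal smallness $e^{-cN\delta}$ with $\delta$ a fixed multiple of the inter-cluster separation, and by the lower bound $e^{-C'cN}$ on the diagonal determinant. All three rates are linear in $N$ with constants controlled by $c$ and the geometric constants $r_1,r_2$ of Assumption~\ref{asum:multhole}, so the argument closes precisely in the regime ``$c$ small enough and clusters well-separated'' that is already assumed — but one must track the constants carefully to confirm the $\delta$-gain wins. One subtlety worth checking is that Lemma~\ref{lem:low det}, as stated, takes a single cluster filling its own screening region; applying it to the $j$-th cluster requires that $(\bw_{j,k}+\ba_j)_k$ satisfies Assumption~\ref{asum:onehole} with its own $c_j = M_j/N$, which is guaranteed by the running assumption~\eqref{eq:runasum}, and that the bound is uniform in the (small) translation $\ba_j$, which it is since the proof of Lemma~\ref{lem:low det} only used the screening/energy structure, both translation-invariant.
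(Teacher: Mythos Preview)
Your plan is correct and uses the same core ingredients as the paper (Gaussian decay of cross-cluster entries, Hadamard's inequality, and the lower bound from Lemma~\ref{lem:low det}), but the organization is different and arguably cleaner. The paper proceeds by induction on the number of clusters: at each step it peels off one cluster, expands the full determinant as a sum over permutations graded by the number $m$ of cross-cluster links, and bounds the $m\geq 1$ contribution combinatorially (each cross-link contributes a factor $e^{-Cd^2N}$, the remaining minors are bounded by $C^{cN}$ via Hadamard, and the sum over $m$ is controlled geometrically). Your approach instead handles all $n$ clusters at once via the column-telescoping device already used in Lemma~\ref{lem:replace ker}, avoiding both the induction and the permutation combinatorics. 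The trade-off is minor: the paper's graded expansion yields a sharper bound on the error (decay $e^{-Cd^2mN}$ in $m$), but this extra precision is not needed; your single-link Hadamard bound $M^{3/2}C^M e^{-cN}$ is already sufficient under the stated smallness of $r_1$ (which, via $c_j\leq R_j^2\leq r_1^2 d^2$, is what makes the $C^M$ factor lose to the Gaussian gain). One small point to make explicit: Lemma~\ref{lem:low det} is stated for $K_{N+M}$, whereas you need it for the $K_\infty$-based blocks $\mathcal{K}^{M_j}$; the paper glosses over this too, and it follows immediately from the $e^{-DN}$ comparison already established in Lemma~\ref{lem:replace ker}.
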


\begin{proof}
 We define
 \begin{equation}\label{eq:big mat bis}
\cK ^{M,j} := \left( \frac{\pi}{N} K_{\infty} (w_\ell,w_k)\right)_{1\leq \ell, k \leq \sum_{k=1} ^j M_k}
\end{equation}
similarly to $\cK^M$, but concatenating only the first $j$ groups of points. That way in particular $\cK^M = \cK^{M,n}$. It suffices to prove that (all remainders $o_N (1)$ appearing in this proof will be exponentially small)
\begin{equation}\label{eq:decouple}
\log \det \cK^{M,j} = \log \det_{M_j\times M_j} \left(\frac{\pi}{N} K_{\infty} (\bw_{j,k},\bw_{j,\ell})_{1\leq k,\ell \leq M_j}  \right) + \log \det \cK ^{M,j-1} +o_N (1) 
\end{equation}
for all $j=2,\ldots,n$ and iterate this relation. We next fix $j\geq 2$ and prove~\eqref{eq:decouple}. Proceeding by induction we are free to assume 
\begin{equation}\label{eq:decouple int}
\log \det \cK ^{M,j-1} = \sum_{k=1} ^{j-1} \log \det_{M_k\times M_k} \left(\frac{\pi}{N} K_{\infty} (\bw_{k,\ell},\bw_{k,m})_{1\leq \ell,m \leq M_k}  \right) +o_N (1)
\end{equation}
We split the points entering in the definition of $\cK ^{M,j}$ into two groups: the $A$ group consisting of the points 
$$\bw^A_{1}, \ldots, \bw^A_{M_j} = \bw_{j,1},\ldots,\bw_{j,M_j}$$ 
and the $B$ group consisting of the other points, 
$$\bw^B_{1}, \ldots, \bw^B_{N_{j-1}} = \bw_{k,\ell}, \quad k= 1 \ldots j-1, \quad \ell = 1, \ldots, M_k$$
with 
$$N_j = \sum_{k=1} ^j M_k.$$
We then expand the determinant 
$$ \det \cK^{M,j} = \sum_{\sigma \in \Sigma_{N_j}} \sgn (\sigma) \prod_{k=1} ^{N_j} \cK ^{M,j}_{k,\sigma(k)}$$
where the sum is over the permutation group of $N_j$ elements.  For clarity of notation we assume that $M_j \leq N_{j-1}$, with simple modifications to the sequel in case the relation is reversed.

Next we split the previous sum according to the number $m$ of $A$ elements that the permutation $\sigma$ sends to $B$ elements. We will denote 
$$ I_m = (i_1,\ldots,i_m), \quad J_m = (j_1,\ldots,j_m)$$
generic $m$-elements subsets of $\{1,\ldots,M_j\}$ and $\{1,\ldots,N_{j-1}\}$ respectively, and use them to label these inter-groups permutations. Then
\begin{multline*}
\det \cK^{M,j} = \sum_{m=0} ^{M_j} (-1)^m \sum_{I_m} \sum_{J_m} \sum_{\sigma \in \Sigma_{M_j-m}} \sum_{\sigma' \in \Sigma_{N_{j-1}-m}} \sgn(\sigma) \sgn(\sigma') \\ \prod_{k=1} ^m  \Kt \left(\bw_{i_k}^A,\bw_{j_k}^B\right) \prod_{h\in I_m^c} \prod_{f\in J_m^c} \Kt \left(\bw_h^A,\bw_{\sigma(h)} ^A\right) \Kt \left(\bw_f^B,\bw_{\sigma'(f)} ^B\right)  
\end{multline*}
with 
$$ \Kt := \frac{\pi}{N} K_\infty$$
and where 
the sums over permutations $\sigma,\sigma'$ are (with an abuse of notation) over the indices of 
$$I_m^c := \left\{1,\ldots,M_j\right\}\setminus I_m$$ 
and 
$$J_m^c :=\left\{1,\ldots,N_{j-1}\right\}\setminus J_m$$ respectively. Grouping some terms we reduce the above to 
\begin{multline}\label{eq:combinatoire}
\det \cK^{M,j} = \sum_{m=0} ^{M_j} (-1)^m \sum_{I_m} \sum_{J_m} \prod_{k=1} ^m  \Kt \left(\bw_{i_k}^A,\bw_{j_k}^B\right) \\ \det_{(M_j - m)\times (M_j - m)} \left( \Kt (\bw_h^A,\bw_{h'} ^A)\right)_{h,h'\in I_m ^c} \, \times \, \det_{(N_{j-1} - m)\times (N_{j-1} - m)} \left( \Kt (\bw_f^B,\bw_{f'} ^B)\right)_{f,f'\in J_m ^c}\\
=: \det_{M_j\times M_j} \left(\frac{\pi}{N} K_\infty (\bw_{j,k},\bw_{j,\ell})_{1\leq k,\ell \leq M_j}  \right) \times \det \cK _{M,j-1} +R_{m\geq 1}
\end{multline}
where we have isolated the $m=0$ term in the last equality. Taking the $\log$ and using 
$$ \log (x+y) = \log (x) + \log \left(1+\frac{y}{x}\right)$$
yields the desired terms from the right-hand side of~\eqref{eq:decouple}, with an error suitably small if we prove that 
\begin{equation}\label{eq:decouple10} R_{m\geq 1} \ll \det_{M_j\times M_j} \left(\frac{\pi}{N} K_{\infty} (\bw_{j,k},\bw_{j,\ell})_{1\leq k,\ell \leq M_j}  \right) \times \det \cK _{M,j-1}
\end{equation}
for large $N$, where $R_{m\geq 1}$ is the sum from the first line of~\eqref{eq:combinatoire}, minus the $m=0$ term. 

Under our assumptions, Lemma~\ref{lem:low det} applies to the two determinants above and gives the lower bound
$$ \det_{M_j\times M_j} \left(\frac{\pi}{N} K_{\infty} (\bw_{j,k},\bw_{j,\ell})_{1\leq k,\ell \leq M_j}  \right) \times \det \cK ^{M,j-1} \geq e^{-C (c - c\log c)N}.
$$
Hence, for sufficiently small $c$, it suffices to prove that 
\begin{equation}\label{eq:decouple11}
 \left|R_{m\geq 1}\right| \leq e^{-CN} 
\end{equation}
for a fixed constant $C>0$. This will imply~\eqref{eq:decouple10}, and inserting in~\eqref{eq:combinatoire} will conclude the proof.

We now turn to the proof of~\eqref{eq:decouple11}. Recall that the points from groups $A$ and $B$ are by definition separated by a minimal, finite distance. As per~\eqref{eq:Kmodule} and Assumption~\ref{asum:multhole} we find that, for any set of indices $I_m,J_m$,
$$\prod_{k=1} ^m  \Kt \left(\bw_{i_k}^A,\bw_{j_k}^B\right) \leq e^{-C d^2 m N}$$
where $d$ is the minimal distance between points of the $A$ and $B$ groups. On the other hand, arguing as in the proof of Lemma~\ref{lem:replace ker}, Hadamard's inequality gives, with an argument similar to~\eqref{eq:use Hadamard},   
$$  \left|\det_{(M_j - m)\times (M_j - m)} \left( \Kt (\bw_h^A,\bw_{h'} ^A)\right)_{h,h'\in I_m ^c} \right| \leq C^{M_j - m} \leq C^{cN}$$
and 
$$ 
\left|\det_{(N_{j-1} - m)\times (N_{j-1} - m)} \left( \Kt (\bw_f^B,\bw_{h'} ^B)\right)_{f,f'\in J_m ^c}\right| \leq C^{N_{j-1} - m}\leq C^{cN}
$$
for all such terms appearing in~\eqref{eq:combinatoire}. We have used that by definition $M_j,N_{j-1}\leq M = cN$. Inserting these bounds in~\eqref{eq:combinatoire} and counting terms with $m$ links from group $A$ to group $B$ we find 
\begin{align*}
 |R_{m\geq 1}| &\leq \sum_{m = 1} ^{M_j} \frac{M_j ! N_{j-1}!}{(M_j-m)!(N_{j-1}-m)!}  C^{2cN} e^{-Cd^2 mN} 
 \\&\leq \sum_{m = 1} ^{M_j} e^{m\log (M_j)} e^{m\log N_{j-1} } e^{-C mN}
 \\&\leq \sum_{m = 1} ^{M_j} e ^{2 m\log M} e^{-C mN} \leq e^{-CN}
\end{align*}
if the constant $r_1$ in Assumption~\ref{asum:multhole} is small enough. Indeed, this assumption implies $c\leq r_1 d^2$. This concludes the proof.
\end{proof}

\subsection{Final calculation}

Lemma~\ref{lem:decouple} will allow to compare $F_N (\ba_1,\ldots,\ba_n)$ to $\sum_{j=1}^n F_N (\ba_j)$, as defined in~\eqref{eq:def free}. Subtracting the appropriate mean-field energies and using results from Section~\ref{sec:MF} will then conclude the proof of Theorem~\ref{thm:multhole}. Let us first give the direct comparison between free energies. We denote 
\begin{equation}\label{eq:int ener}
 \Int^{jk} := \sum_{\ell = 1} ^{M_j} \sum_{m=1} ^{M_k} -\log |\bw_{j,\ell} - \bw_{k,m}| 
 \end{equation}
the Coulomb interaction energy between clusters $j$ and $k$.
 
\begin{proposition}[\textbf{Comparison of multiple-holes and single-holes free energies}]\label{pro:freener}\mbox{}\\
Under Assumptions~\ref{asum:onehole} and~\ref{asum:multhole} we have, with an exponentially small remainder,
\begin{multline}\label{eq:freeeners}
F_N (\ba_1,\ldots,\ba_n) =  \sum_{j=1} ^n F_N (\ba_j) -  \sum_{1\leq j < k \leq n} \Int ^{jk} \\
+ \frac{3N^2}{8} \left( (1-n)  + c^2 - \sum_{j=1} ^n c_j^2  - \frac{2}{3}(1+c)^2 \log (1+c) + \frac{2}{3}\sum_{j=1} ^n (1+c_j)^2 \log (1+c_j)\right)\\
+\frac{n-1}{4} N \log N + \frac{(n-1)}{2}\left(\frac{\log 2\pi}{2} - 1\right)N + \frac{5(n-1)}{24}\log N\\
+\frac{(n-1)}{2}\left(\zeta' (-1) + \frac{\log 2\pi}{2}\right) + \frac{1}{24}\left(\log (1+c) - \sum_{j=1} ^n \log (1+c_j) \right) + o_N (1).
\end{multline}
\end{proposition}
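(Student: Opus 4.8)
The plan is to feed the exact formula of Lemma~\ref{lem:exp} into both $F_N(\ba_1,\ldots,\ba_n)$ and each single-cluster free energy $F_N(\ba_j)$, and then to use the decoupling property to see that all the ``hard'' determinantal terms cancel, leaving only an explicitly computable combination of Ginibre partition functions and Stirling factors. First I would apply Lemma~\ref{lem:exp} to $\cZ_N(\ba_1,\ldots,\ba_n)$ with the concatenated list of all $M=\sum_j M_j$ pinned charges, and separately to each $\cZ_N(\ba_j)$ with only the $M_j$ charges of the $j$-th cluster. Taking $-\tfrac12\log$ and using the identity $\log\tfrac{(N+K)!}{N!}-K\log\tfrac N\pi=A(K,N)$ with $A(K,N)$ as in~\eqref{eq:AMN}, one writes $F_N(\ba_1,\ldots,\ba_n)$ as the sum of five pieces: $-\tfrac12\log\cZ_{N+M}(0)$; $+\tfrac12 A(M,N)$; $-\tfrac12$ times the log of $\det_{M\times M}$ of the normalized kernel matrix $\tfrac\pi N K_{N+M}$ evaluated at the $M$ translated pinned points; the quadratic piece $-\tfrac N2\sum_{j,k}|\bw_{j,k}+\ba_j|^2$; and the Vandermonde piece $\sum\log|\bw_{j,k}+\ba_j-\bw_{j',k'}-\ba_{j'}|$ over all pairs of pinned points. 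Doing the same with each $F_N(\ba_j)$ and subtracting, the quadratic pieces cancel identically and the within-cluster Vandermonde factors cancel, leaving precisely $-\sum_{1\le j<k\le n}\Int^{jk}$.

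The crux is that the determinantal pieces also cancel up to $o_N(1)$. I would first replace $K_{N+M}$ by the translation-invariant kernel $K_\infty$ of~\eqref{eq:Kinfty} in all the determinants: for the single-cluster determinants this is exactly Lemma~\ref{lem:replace ker}, and for the full $M\times M$ determinant the same proof applies verbatim, since the a priori lower bound of Lemma~\ref{lem:low det} holds with the total ratio $c=\sum_j c_j$ (which is small by assumption) and the Hadamard/column-replacement estimate only uses that Assumption~\ref{asum:multhole} keeps every $\bw_{j,k}+\ba_j$ at distance $\ge\delta$ from the circle $\{|z|=\sqrt{1+c}\}$; the cost of each replacement is $o_N(1)$. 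Then Lemma~\ref{lem:decouple} gives that $\log\det_{M\times M}[\tfrac\pi N K_\infty]$ equals $\sum_j\log\det_{M_j\times M_j}[\tfrac\pi N K_\infty(\bw_{j,\cdot}+\ba_j,\bw_{j,\cdot}+\ba_j)]+o_N(1)$, and the right-hand side is exactly the sum of the single-cluster determinantal pieces (after their own $K_\infty$-replacement). Hence the determinantal contributions cancel, and one arrives at
\begin{align*}
F_N(\ba_1,\ldots,\ba_n) - \sum_{j=1}^n F_N(\ba_j) &= -\tfrac12\Bigl(\log\cZ_{N+M}(0) - \sum_{j=1}^n\log\cZ_{N+M_j}(0)\Bigr) \\
&\quad + \tfrac12\Bigl(A(M,N) - \sum_{j=1}^n A(M_j,N)\Bigr) - \sum_{1\le j<k\le n}\Int^{jk} + o_N(1).
\end{align*}

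To conclude, I would insert the exact asymptotics of the Ginibre partition function $\log\cZ_{K}(0)$ recalled in Appendix~\ref{app:Ginibre}, which through the Barnes $G$-function asymptotics carries explicit $N^2$, $N\log N$, $N$, $\log N$ and $O(1)$ terms (the last containing the $\zeta'(-1)$ constant), all explicit in $K/N$, together with Stirling's formula for $A(K,N)$ as in~\eqref{eq:AMN}, and collect terms. Using $\sum_j M_j=M$, $\sum_j c_j=c$, and in particular $(1+c)-\sum_j(1+c_j)=1-n$, the leading $N^2$ contributions assemble into the $\tfrac{3N^2}{8}(\cdots)$ bracket, the $N\log N$ contributions into $\tfrac{n-1}4 N\log N$, and the first-order $O(N)$ contributions into $\tfrac{n-1}2\bigl(\tfrac{\log 2\pi}2-1\bigr)N$; meanwhile the $-\tfrac1{12}\log$ terms from the Barnes $G$-function and the half-log corrections from Stirling combine to give $\tfrac{5(n-1)}{24}\log N$ for the $\log N$ coefficient and $(n-1)\bigl(\tfrac{\zeta'(-1)}2+\tfrac{\log 2\pi}4\bigr)+\tfrac1{24}\bigl(\log(1+c)-\sum_j\log(1+c_j)\bigr)$ for the constant term. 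This is precisely~\eqref{eq:freeeners}.

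The difficulty is not conceptual but computational. Once Lemmas~\ref{lem:replace ker} and~\ref{lem:decouple} are available, the structural cancellations of the first two steps are immediate; the real work is in the last step, namely tracking every contribution down to $o_N(1)$ and verifying that the $O(1)$ constants produced by Stirling's formula and by the Barnes $G$-function asymptotics (the $\zeta'(-1)$, the $\log 2\pi$ and the $\log(1+c)$ pieces) assemble into exactly the stated expression. A minor technical point to dispatch along the way is the extension of Lemma~\ref{lem:replace ker} from the single-cluster to the full $M$-point determinant, which requires nothing beyond the smallness of $c$ and the separation of the clusters from the droplet's outer boundary guaranteed by Assumption~\ref{asum:multhole}.
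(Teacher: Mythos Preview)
Your proposal is correct and follows essentially the same approach as the paper: apply Lemma~\ref{lem:exp} to both the multi-cluster and single-cluster partition functions, pass to $K_\infty$ via Lemma~\ref{lem:replace ker} (extended to the full $M$-point determinant), decouple via Lemma~\ref{lem:decouple} so that the determinantal pieces cancel, and then collect the explicit Ginibre/Stirling asymptotics from Appendix~\ref{app:Ginibre}. The only cosmetic difference is that the paper first manipulates $2F_N(\ba_1,\ldots,\ba_n)$ alone and then applies Lemma~\ref{lem:exp} ``backwards'' to recognize the $F_N(\ba_j)$, whereas you expand both sides from the outset and subtract; the ingredients and the final computation are identical.
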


\begin{proof}
Reproducing the proof of Lemma~\ref{lem:exp} to compute $\cZ_N (\ba_1,\ldots,\ba_n)$ we obtain 
\begin{multline*}
2 F_N (\ba_1,\ldots,\ba_n) = -\log \cZ_N (\ba_1,\ldots,\ba_n) \\ 
= -\log \ZGin_{N+M} - \sum_{j=1}^M |\bw_j|^2 + 2 \sum_{1\leq j<k \leq M} \log |\bw_j - \bw_k| + A(M,N) + \log \det_{M\times M} \left( \frac{\pi}{N} K_{N+M} (w_j,w_k) \right)  
\end{multline*}
where we have for now concatenated all points in a single list, as in proofs of the preceding subsection, and $ A(M,N)$ is as in~\eqref{eq:AMN} 
Reorganizing terms and arguing as in the proof of Lemma~\ref{lem:replace ker}  we find 
\begin{multline*}
 2 F_N (\ba_1,\ldots,\ba_n) = -\log \ZGin_{N+M} + A(M,N) - 2 \sum_{j=1}^n \cH_N \left(\bw_{j,1}, \ldots,\bw_{j,M_j}\right) \\
- 2 \sum_{1\leq j < k \leq n} \Int^{jk} + \log \det_{M\times M} \left( \frac{\pi}{N} K_{\infty} (w_j,w_k) \right) +o_N (1) 
\end{multline*}
using the notation~\eqref{eq:ref energy} and~\eqref{eq:int ener}. Next, using Lemma~\ref{lem:decouple} we have 
$$ \log \det_{M\times M} \left( \frac{\pi}{N} K_{\infty} (w_j,w_k) \right) = \sum_{j=1} ^n \log \det_{M_j\times M_j} \left(\frac{\pi}{N} K_{\infty} (\bw_{j,k},\bw_{j,\ell})_{1\leq k,\ell \leq M_j}  \right) +o_N (1).$$
Using Lemma~\ref{lem:replace ker} once more thus leads to 
\begin{multline*}
 2 F_N (\ba_1,\ldots,\ba_n) = -\log \ZGin_{N+M} + A(M,N) - 2 \sum_{j=1}^n \cH_N \left(\bw_{j,1}, \ldots,\bw_{j,M_j}\right) \\ - 2 \sum_{1\leq j < k \leq n} \Int ^{jk} + \sum_{j=1} ^M \log \det_{M_j\times M_j} \left(\frac{\pi}{N} K_{\infty} (\bw_{j,k},\bw_{j,\ell})_{1\leq k,\ell \leq M_j}  \right) +o_N (1). 
\end{multline*}
We next use Lemma~\ref{lem:exp} ``backwards'' to deduce 
\begin{multline}\label{eq:pre final}
  2 F_N (\ba_1,\ldots,\ba_n) = 2 \sum_{j=1} ^n F_N (\ba_j) - 2 \sum_{1\leq j < k \leq n} \Int ^{jk} -\log \ZGin_{N+M} + A(M,N) \\ + \sum_{j=1} ^n \left( \log \ZGin_{N+M_j} - A (M_j,N)\right) +o_N (1).
\end{multline}
Combining~\eqref{eq:AMN} and~\eqref{eq:Ginibre part 3} we obtain, for any $J\propto N$,  
\begin{multline*}
-\log \ZGin_{N+J} (0;0) + A(J,N) = \frac{3}{4} (N+J) ^2 - \frac{(N+J)^2}{2} \log \frac{N+J}{N} \\ - \frac{1}{2}(N+J) \log N - \left(\frac{\log 2\pi}{2} - 1\right)(N+J) \\
- \frac{5}{12} \log N - \frac{5}{12} \log \frac{N+J}{N} - \zeta'(-1) - \frac{\log 2\pi}{2} + o_N(1).
\end{multline*}
Using the above for $J=M$ and $J=M_j$, $j=1 \ldots n$, recalling that $M=cN, M_j = c_j N$ with $\sum_j M_j = M$ leads to 
\begin{multline*}
-\log \ZGin_{N+M} + A(M,N)  + \sum_{j=1} ^n \left( \log \ZGin_{N+M_j} - A (M_j,N)\right) = \\
\frac{3N^2}{4} \left( (1-n)  + c^2 - \sum_{j=1} ^n c_j^2  - \frac{2}{3}(1+c)^2 \log (1+c) + \frac{2}{3}\sum_{j=1} ^n (1+c_j)^2 \log (1+c_j)\right)\\
+\frac{n-1}{2} N \log N + (n-1)\left(\frac{\log 2\pi}{2} - 1\right)N + \frac{5(n-1)}{12}\log N\\
+(n-1)\left(\zeta' (-1) + \frac{\log 2\pi}{2}\right) + \frac{1}{12}\left(\log (1+c) - \sum_{j=1} ^n \log (1+c_j) \right) + o_N (1).
\end{multline*}
Inserting in~\eqref{eq:pre final} we finally obtain~\eqref{eq:freeeners}.
\end{proof}

There remains to combine with the mean-field considerations of Section~\ref{sec:MF} to conclude the

\begin{proof}[Proof of Theorem~\ref{thm:multhole}]
 Let $\Eel (\ba_1,\ldots,\ba_n),\Eel (\ba_j)$ be the mean-field energies with all clusters of pinned charged present (respectively, with only the $j$-th one present), as defined in Section~\ref{sec:MF} with $J=N$. Subtracting $\Eel (\ba_1,\ldots,\ba_n)$ from both sides of~\eqref{eq:freeeners}, adding and subtracting $\sum_{j=1}^n  \Eel (\ba_j)$ to the right-hand side there only remains to observe that 
 \begin{multline}\label{eq:split MFE 2}
\Eel (\ba_1,\ldots,\ba_n) - \sum_{j=1}^n  \Eel (\ba_j)  =  - \sum_{1\leq j < k \leq n} \Int ^{jk} \\
+\frac{3N^2}{8} \left( (1-n)  + c^2 - \sum_{j=1} ^n c_j^2  - \frac{2}{3}(1+c)^2 \log (1+c) + \frac{2}{3}\sum_{j=1} ^n (1+c_j)^2 \log (1+c_j)\right).
 \end{multline}
This follows from inspection of~\eqref{eq:split MFE} with an induction on $n$. Each induction step is identical to the $n=2$ one, modulo changing notation. Consider then two clusters of $M_1 = c_1 N$  and $M_2 = c_2 N$ points, corresponding radii in~\eqref{eq:screening subset} 
$$ R_1 ^2 = 1+c_1, \quad R_2 ^2 = 1+c_2, \quad R_{12} ^2 = 1+ c = 1+c_1 +c_2 $$
and constants~\eqref{eq:MFsol2}. Comparing~\eqref{eq:split MFE} with~\eqref{eq:split MFE 2} we need to show that 
 \begin{align}
-\frac{3}{8}   &+ \frac{3}{8} c^2 - \frac{3}{8} \sum_{j=1} ^2 c_j^2  - \frac{1}{4}(1+c)^2 \log (1+c) + \frac{1}{4}\sum_{j=1} ^2 (1+c_j)^2 \log (1+c_j) \nonumber\\
&= \frac{1}{2N^2}\left(C_{R_{12}} -C_{R_{1}} - C_{R_{2}}\right) \label{eq:0}\\
&-\frac{1}{2}\left(\frac{R_{12}^4}{4} -R_{12}^4 \log R_{12} - \frac{R_{1}^4}{4} + R_{1}^4 \log R_{1}- \frac{R_{2}^4}{4}+ R_{2}^4 \log R_{2}\right)\label{eq:1}\\
 &-  c_1 \left(R_{12}^2 \log R_{12} - \frac{R_{12}^2}{2} - R_{1}^2 \log R_{1} + \frac{R_{1}^2}{2}\right) \label{eq:2}\\
 &- c_2 \left(R_{12}^2 \log R_{12} - \frac{R_{12}^2}{2} - R_{2}^2 \log R_{2} + \frac{R_{2}^2}{2}\right)\label{eq:3}.
\end{align}
But, using~\eqref{eq:MFsol2}, the terms on the second line~\eqref{eq:0} give altogether 
$$\frac{1}{2} \left((1+c_1) \log (1+c_1) + (1+c_2) \log (1+c_2) - (1+c) \log (1+c) - 1\right)$$
while those on the third line~\eqref{eq:1} amount to 
\begin{multline*}
\frac{1}{4} (1+c)^2 \log (1+c) - \frac{1}{4} (1+c_1)^2 \log (1+c_1) -\frac{1}{4} (1+c_2)^2 \log (1+c_2) - \frac{1}{8} \left((1+c^2) - (1+c_1^2) - (1+c_2^2)\right)\\
=\frac{1}{4} (1+c)^2 \log (1+c) - \frac{1}{4} (1+c_1)^2 \log (1+c_1) -\frac{1}{4} (1+c_2)^2 \log (1+c_2) + \frac{1}{8} \left(1 - c^2 + c_1^2 + c_2^2\right)
\end{multline*}
and those on the fourth~\eqref{eq:2} and fifth~\eqref{eq:3} lines add up to 
$$
\frac{c}{2} (1+c) - \frac{c}{2} (1+c) \log (1+c) - \frac{c_1}{2} (1+c_1) + \frac{c_1}{2} (1+c_1) \log (1+c_1) - \frac{c_2}{2} (1+c_2) + \frac{c_2}{2} (1+c_2) \log (1+c_2),
$$
leading to the desired identity.
\end{proof}

\newpage

\appendix

\section{Some formulae for the Ginibre ensemble}\label{app:Ginibre}

\subsection{Partition function}

Recall the definition of the Ginibre partition function for $J$ particles in a background of charge density $-4N$ 
\begin{equation}\label{eq:Ginibre def}
\ZGin_J := \int_{\R^{2J}} \prod_{1\leq j < k \leq J} |z_j - z_k| ^2 e ^{-N\sum_{j=1}^J |z_j|^2} dz_1 \ldots dz_J.  
\end{equation}
In the main text we have been using the expansion
\begin{multline}\label{eq:Ginibre part 3}
-\log \ZGin_J = - \log \cZ_J (0,0) \\
= - \frac{J(J+1)}{2} \log \frac{J}{N}
+ \frac{3}{4} J^2 -\frac{1}{2} J \log J - \left( \frac{\log 2\pi}{2} - 1 \right) J \\ 
- \frac{5}{12} \log J -  \zeta'(-1) - \frac{\log (2\pi)}{2} 
+ o_J (1) 
\end{multline}
with an exponentially small remainder.

We recalled in~\cite[Appendix~A]{LamLunRou-22} the well-known formula 
\begin{equation}\label{eq:Ginibre part 0}
\ZGin_J = \frac{\pi^J \prod_{k=1}^J k!}{N^{J(J+1)/2}}
\end{equation}
For $N=J$, we have (cf e.g.~\cite[Equation~(3.7)]{ByuSeoYan-24})
\begin{multline}\label{eq:Ginibre part 2}
-\log \ZGin_N = - \log \ZN (0,0) = \frac{3}{4} N^2 -\frac{1}{2} N \log N - \left( \frac{\log 2\pi}{2} - 1 \right) N \\ 
- \frac{5}{12} \log N -  \zeta'(-1) - \frac{\log (2\pi)}{2} + o_N (1)
\end{multline}
with $\left|o_N (1)\right| \leq e^{-CN}$. Since 
$$ \ZGin_J = \frac{\pi^J \prod_{k=1}^J k!}{J^{J(J+1)/2}} \left(\frac{J}{N}\right) ^{J(J+1)/2}$$
we deduce that, in the general case $J\neq N$ of a mismatch between particle number and background charge density~\eqref{eq:Ginibre part 3} holds.

\subsection{Correlation kernel}

We collect some bounds on the Ginibre correlation kernel(s) that can be found, inter alia, in~\cite[Section~3]{LamLunRou-22}. First we have~\cite[Equation~(3.2)]{LamLunRou-22}
\begin{equation} \label{eq:Kmodule}
| K_\infty(z,w) | = \frac{N}{\pi} e^{-N|z-w|^2/2}.
\end{equation} 
Also, from~\cite[Equation~(3.6)]{LamLunRou-22}, for all $M\geq 0$
\begin{equation}\label{eq:Kmodule2}
|K_{N+M}(z,w)| \leq \frac{N}{\pi} e^{-N( |z|-|w|)^2 /2}  . 
\end{equation}
If $|z|,|w|\leq 1 - \delta$, starting from~\cite[Equation~(3.14)]{LamLunRou-22} we get
\begin{equation}\label{eq:Kdiff pre}
\left|K_{N+M}(z,w)- K_{\infty}(z,w)\right| \leq C N ^{1/2} e^{-\frac{N}{2} \left(1-|z|+1-|w|\right)}  . 
\end{equation}
because the function $\varphi(x)$ used therein is decreasing and convex, so that $\varphi (x) \geq \varphi (1) + \varphi'(1) (x-1)$. It follows that, for  $|z|,|w|\leq 1 - \delta$
\begin{equation}\label{eq:Kdiff pre 2}
\left|K_{N+M}(z,w)- K_{\infty}(z,w)\right| \leq C N ^{1/2} e^{-CN \left(||z|-1| + ||w|-1|\right)}  . 
\end{equation}
Note that, in the proof of~\cite[Lemma~3.3]{LamLunRou-22}, $n$ was assumed fixed in the limit $N\to \infty$ so that the radius of the droplet for $N+n$ Ginibre particles in a background density $-4N$ was $\sim 1$. For $M\propto N$, adapting the estimates therein we find that 
\begin{equation}\label{eq:Kdiff}
\left|K_{N+M}(z,w)- K_{\infty}(z,w)\right| \leq C N ^{1/2} e^{-CN \left(\left||z|-\sqrt{1+c}\right| + \left||w|-\sqrt{1+c}\right|\right)}.
\end{equation}
if $|z|,|w|\leq \sqrt{1+c} - \delta$. Indeed $\sqrt{1+c}$ is the radius of a Ginibre droplet for $N+M$ particles, $M= cN$.


\newpage 

\end{document}